\newtheorem{theorem}{Theorem}
\newtheorem{lemma}{Lemma}
\newtheorem{corollary}{Corollary}
\theoremstyle{remark}
\newtheorem{remark}{Remark}
\theoremstyle{definition}
\newtheorem{definition}{Definition}
\begin{document}

\setlength{\abovedisplayskip}{1mm}
\setlength{\belowdisplayskip}{1mm}
\setlength{\abovecaptionskip}{1mm}

\sloppy

\title{Rover-to-Orbiter Communication in Mars:\\ Taking Advantage of the Varying Topology} 

\author{Songze~Li,~\IEEEmembership{Student~Member,~IEEE,}
        David~T.H.~Kao,~\IEEEmembership{Member,~IEEE,}
        and~A.~Salman~Avestimehr,~\IEEEmembership{Member,~IEEE}
\thanks{This work was partly supported by JPL R\&TD grant RSA-1517455, and NSF grants CAREER 1408639, CCF-1408755, NETS-1419632, EARS-1411244, and ONR award N000141310094. Materials in this paper were presented in part at the IEEE International Symposium on Information Theory, Hong Kong, June 2015.}
\thanks{S.~Li and A.S.~Avestimehr are with the Department of Electrical Engineering, University of Southern California, Los Angeles, CA, 90089, USA (e-mail: songzeli@usc.edu; avestimehr@ee.usc.edu).}
\thanks{This research work  was completed while D.T.H.~Kao was part of the Universtiy of Southern California (e-mail: kaod@usc.edu). D.T.H. Kao is now a part of Google Inc.}
}

\maketitle

\begin{abstract}
In this paper, we study the communication problem from rovers on Mars' surface to Mars-orbiting satellites. We first justify that, to a good extent, the rover-to-orbiter communication problem can be modelled as communication over a $2 \times 2$ X-channel with the network topology \emph{varying} over time. For such a fading X-channel where transmitters are only aware of the time-varying topology but not the time-varying channel state (i.e., no CSIT), we propose coding strategies that \emph{code across topologies}, and develop upper bounds on the sum degrees-of-freedom (DoF) that is shown to be tight under certain pattern of the topology variation. Furthermore we demonstrate that the proposed scheme approximately achieves the ergodic sum-capacity of the network. Using the proposed coding scheme, we numerically evaluate the ergodic rate gain over a time-division-multiple-access (TDMA) scheme for Rayleigh and Rice fading channels. We also numerically demonstrate that with practical orbital parameters, a 9.6\% DoF gain, as well as more than 11.6\% throughput gain can be achieved for a rover-to-orbiter communication network.  
\end{abstract}

\begin{IEEEkeywords}
Mars, Rover-to-Orbiter Communication, X-channel, Varying Topologies, Coding Across Topologies.
\end{IEEEkeywords}

\section{Introduction}
As an increasing amount of science and engineering data are collected by rovers on Mars' surface and sent back to Earth, efficient rover-to-Earth communication is of primary importance. Originally, each Mars exploration rover communicated to the deep space network (DSN) on Earth directly in a point-to-point fashion. However, the data rate suffered from limited rover transmission power, large path loss and occlusion in line-of-sight. Currently, much more data is sent to Earth via the relaying of Mars orbiting satellites (orbiters). For example, each of the 2004 Mars Exploration Rovers (MER) returned well over 150 Mbits of data to Earth per Mars solar day (sol), 92\% of which were relayed by two Mars orbiters. That was 5 times of the communication rate of the 1996 Mars Pathfinder lander, which was only capable of direct-to-earth (DTE) communication \cite{statman2004coding}. The dramatic data rate increase enabled more frequent usage of the data-rich applications like continuous sensing and monitoring, video recording and high-resolution, three-color panoramas.

Current Mars orbiters employ the decode-and-forward (DF) relaying strategies such that they decode the messages from rovers and re-encode them to send to Earth. On the rover-to-orbiter proximity link, time-division-multiple-access (TDMA) that allows one rover to talk to one orbiter at a time has been used to avoid interference. Although TDMA requires low communication overhead and little coordination between communication entities, it severely limits the overall communication rate. In this paper, we are interested in within the DF framework (although other joint-decoding schemes \cite{avestimehr2011,lim2011noisy,nazer2011} may achieve higher data rate), how can we increase the rover-to-orbiter communication rate without significantly increasing the system complexity and communication overhead? 

To answer this question, we first note that the existence of a line-of-sight (LOS) communication link from a rover to an orbiter varies over time. For example, the Mars Reconnaissance Orbiter (MRO) appears in the line-of-sight of a rover located close to the Mars north pole 12 times per sol, for a duration of 8 minutes each time~\cite{taylor2006}. Consequently, for a system comprised of multiple rovers and orbiters, the \emph{topology} of the communication network, determined by which rovers can see which orbiters, varies over time. We also note that communicating topological information to the rovers requires little communication overhead (one bit of feedback per rover-orbiter pair at each time), and is more practical than providing the rovers with full channel state information. 

Even with only knowledge of network topology at transmitters, recent results ~\cite{vahid2013,sun2013,issa2013,gherekhloo2013,GCS,chen14} have demonstrated the potential communication rate gains by \emph{coding across topologies} (CAT) for the interference channel, X-channel, and broadcast channel with varying topologies. Given the time-varying property of the rover-to-orbiter network topology, our goal in this paper is to understand the fundamental impact of coding across topologies on the performance of the rover-to-orbiter communication network. As a first step, we justify that a 2-rover, 4-orbiter communication problem can be modelled as a $2\times 2$ X-channel with varying topologies. We also assume that, due to practical constraints (e.g., limited coordination), no channel state information is available at transmitters other than the network topology (i.e., no CSIT). 

For a $2 \times 2$ Gaussian X-channel with varying topologies and no CSIT besides the network topology, we identify two \emph{coding opportunities} where we may exploit the time varying topologies by coding across them, and we characterize the sum-DoF. The key idea of the achievable scheme is to exhaustively create coding opportunities in a way that maximizes the DoF gain. Conversely, we develop upper bounds on the sum-DoF that is tight under certain pattern of the topology variation. In addition to the DoF results, we also demonstrate that our coding scheme achieves the ergodic sum-capacity to within a constant gap for uniform-phase, Rayleigh and Rice fading X-channels with varying topologies. 

Finally, we numerically demonstrate the rate gains of our proposed scheme over the rate achieved by TDMA in Rayleigh and Rice fading channels with varying topologies. We also demonstrate that for a 2-rover, 4-orbiter communication network, applying the proposed coding scheme can increase the DoF by 9.6\%, and the throughput by at least 11.6\%.

\section{Problem setting and statement of main results}
To study the rover-to-orbiter communication problem, we begin by justifying that we may model the problem as communication over a $2 \times 2$ Gaussian X-channel with varying topologies.

Currently, the Mars landing rovers (e.g., Mars Science Laboratory (MSL) Curiosity rover~\cite{makovsky2009}) send exploration data to Earth by relaying through two Mars orbiters: MRO and Mars Odyssey Orbiter~\cite{makovsky2002}. However, additional orbiters, including the Mars Express spacecraft and the recently launched Mars Atmosphere and Volatile Evolution (MAVEN) Orbiter, are also capable of relaying. We assume, according to \cite{taylor2006}, that a rover is able to establish a communication link to an orbiter if and only if the orbiter is within the line-of-sight of the rover, which is defined as:
\begin{definition}[line-of-sight]
An orbiter is said to be within the line-of-sight (LOS) of a rover if the orbiter forms at least a \ang{10} elevation angle from the surface tangent plane of the rover.  $\hfill \square$
\end{definition}

We then simulated a system comprised of 2 rovers and 4 orbiters over one sol (approximately 24 hours), and recorded the instantaneous network topologies determined by the active links. 

\begin{table}[htbp]
   \caption{Normalized time fractions of some topologies with three or more LOS links over a sol}
   \label{tab:statis}
   \centering
  \includegraphics[width=0.48\textwidth]{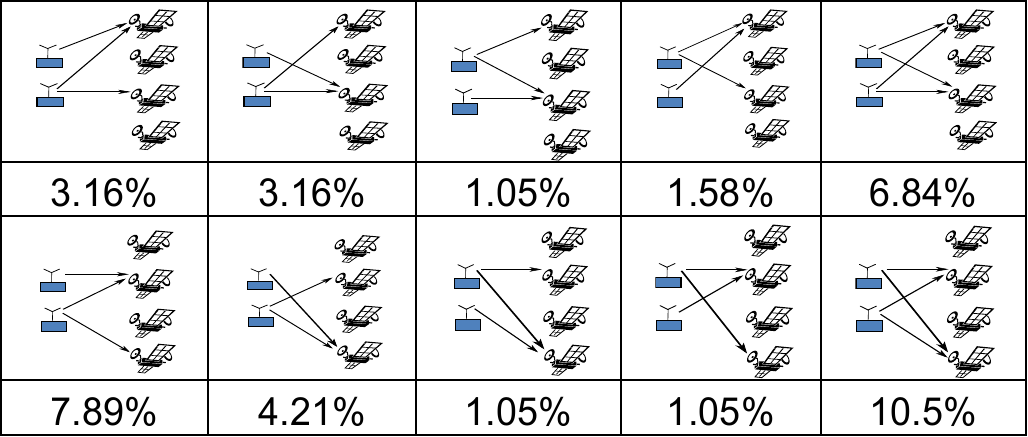}
\end{table}

Our simulation indicates that, interestingly, 40\% (sum of the time fractions of the topologies listed in Table~\ref{tab:statis}) of the time when \emph{any} communication is possible, the network topology is made up of three or more rover-to-orbiter LOS links. In all of these ``complicated'' topologies (listed in Table~\ref{tab:statis}), only 2 rovers and 2 orbiters are relevant. Furthermore, because all orbiters serve as relays to forward decoded messages to Earth, a rover's message will reach Earth as long as any single orbiter can decode the message. Since 1) the topology rarely includes more than two orbiters, 2) the identity of the rover is irrelevant, and 3) we wish to maximize the data rate, it is sufficient to study the sum-capacity of the $2 \times 2$ X-channel, where each rover communicates messages to each of the two orbiters.

Additionally, we emphasize that the network topology varies over time in a non-trivial manner. This is evidenced by observing that, during one sol, a good fraction of time (ranging from 1\% to 11\%) is spent in each of the topologies listed in Table~\ref{tab:statis}. Moreover, the time spent in each topology in Table~\ref{tab:statis} is comparable to the time spent in other topologies with only one or two LOS communication links. Therefore, to capture the time-varying properties of the topology of the aforementioned $2\times 2$ X-channel, all non-degenerate topologies should be incorporated into the problem formulation as possible scenarios. To this end, a $2 \times 2$ Gaussian X-channel with the topology varying among all possibilities (except for the case where no LOS link exists) becomes the natural choice to model the rover-to-orbiter communication problem. 

\subsection{Problem Formulation and Definitions}
A $2 \times 2$ Gaussian X-channel with varying topologies consists of two transmitters and two receivers, where each transmitter has an independent message to send to each of the two receivers. The network topology is defined by a binary vector $\mathbf{c}=(c_{11},c_{12},c_{21},c_{22})^T$, where $c_{ij}$, $i,j \in \{1,2\}$, is equal to 1 if and only if there is a LOS (defined in Definition~1) communication link from Transmitter $j$ to Receiver $i$, $c_{ij}=0$ otherwise. We index all considered topologies in Fig.~\ref{fig:top}, and let $\mathcal{A} =\{s_1,s_2,s_3,s_4,m_1,m_2,b_1,b_2,z_1,z_2,z_3,z_4,i_1,i_2,f\}$ denote the set of all topology indices. We assume that over the course of communication (consisting of $N$ time slots), the network topology changes randomly with an arbitrary distribution over the possibilities in Fig.~\ref{fig:top}, and we denote the fraction of time spent in Topology $a$ as $\lambda_a$, for all $a \in \mathcal{A}$. Because we focus only on time instances where at least one communication link exists, $\sum\limits_{a \in \mathcal{A}} \lambda_a =1$.

\begin{figure}[htbp]
   \centering
   \includegraphics[width=0.45\textwidth]{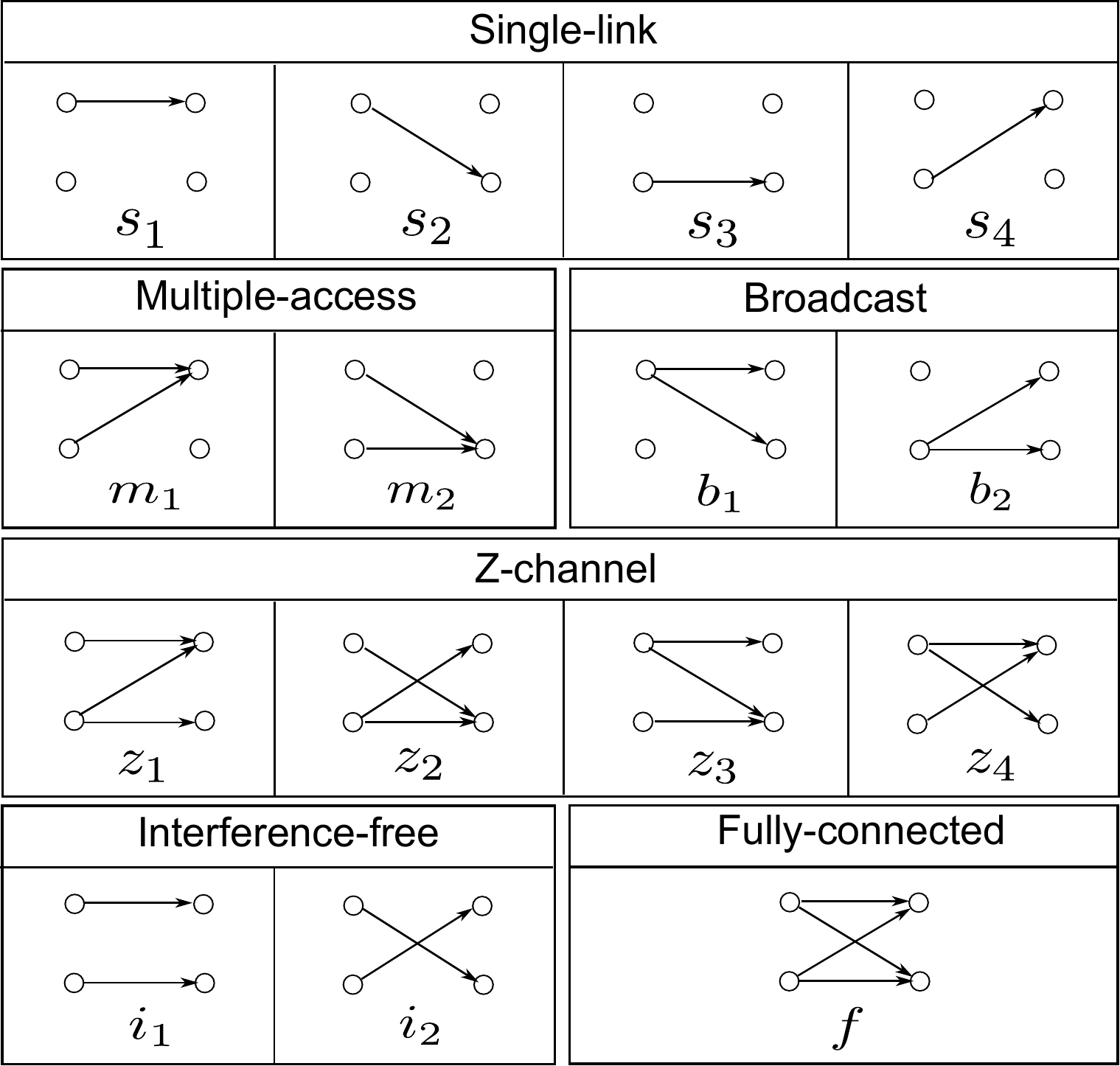}
   \caption{Possible topologies for a $2\times2$ X-channel. Each topology diagram consists of 4 nodes with 2 transmitter nodes on the left and 2 receiver nodes on the right. A directed edge from a transmitter node to a receiver node exists if and only if there is a LOS communication link from the transmitter to the receiver.}
   \label{fig:top}
\end{figure}

Given the varying topology, the output signal at Receiver $j$ in time slot $n$ is, for $i,j \in \{1,2\}$:
\begin{equation*}
Y_j(n) = \sum\limits_{i: c_{ji}(n)=1} h_{ji}(n)X_i(n) +  Z_j(n),
\end{equation*}
where $\mathbf{c}(n)$ is the topology in time slot $n$, globally known at all transmitters and receivers, $X_i(n) \in \mathbb{C}$ is the input symbol of Transmitter $i$, $h_{ji}(n) \in \mathbb{C}$ is the channel coefficient from Transmitter $i$ to Receiver $j$, and $Z_j(n)\sim \mathcal{N}(0,1)$ is the complex additive white Gaussian noise at Receiver $j$. We assume that the channel coefficients, $h_{ji}(n)$, are drawn identically and independently across indices $i,j$ from a bounded, continuous distribution. Each transmitter is subject to the same power constraint $P$, i.e., $\mathbb{E}\{|X_i(n)|^2\} \leq P$, $i = 1,2$.

We denote the collection of channel coefficients over $N$ time slots as ${{\boldsymbol h}^N = \{h_{11}(n),h_{12}(n),h_{21}(n),h_{22}(n)\}_{n=1}^N}$, and we assume that the transmitters are not aware of the values of ${\boldsymbol h}^N$, while the receivers know the values of ${\boldsymbol h}^N$ perfectly. Note that this assumption of no CSIT but network topology was also referred as \emph{minimal CSIT} in the topological interference management (TIM) problem defined in~\cite{jafar13}, .

Over $N$ time slots, message $W_{ji}\in \{1,\ldots,2^{NR_{ji}}\}$, $i,j =1,2$, is communicated from Transmitter $i$ to Receiver $j$ at rate $R_{ji}$. The rate tuple $(R_{11},R_{12},R_{21},R_{22})$ is achievable if the probability of decoding error for every message vanishes as $N \rightarrow \infty$. The capacity region $\mathcal{C}$ is the union of all achievable rate vectors, and the sum-capacity is defined as
\begin{align*}
C_{\Sigma} \overset{\Delta}{=} \underset{(R_{11},R_{12},R_{21},R_{22})\in \mathcal{C}}{\text{max}} R_{11}+R_{12}+R_{21}+R_{22}. 
\end{align*} 

We also define the sum degrees-of-freedom (DoF) as $d_{\Sigma} \overset{\Delta}{=} \lim \limits_{P \rightarrow \infty} \dfrac{C_\Sigma}{\log P}$. 

\subsection{Main Results}
Our first result characterizes the sum-DoF of the $2\times 2$ Gaussian X-channel with the topology varying symmetrically as defined below: 

\begin{definition}
A $2 \times 2$ Gaussian X-channel with the topology varying across the ones in Fig.~\ref{fig:top}, is said to have \emph{symmetric} topology variation if  $\lambda_{z_1}+\lambda_{z_4} = \lambda_{z_2}+\lambda_{z_3}$.  $\hfill \square$
\end{definition}

\begin{remark}
Notice that the property of a ``symmetric topology variation'' is a property the fractions of the time the system stays in different topologies should satisfy, but not a property of the individual topologies. $\hfill \square$
\end{remark} 

\begin{theorem}
The sum-DoF of the $2 \times 2$ Gaussian X-channel, with symmetric topology variation 
\begin{align}
d_{\Sigma} =& 1+\lambda_{i_1}+\lambda_{i_2} \nonumber \\
+& \text{min}\big\{\lambda_{z_1}+\lambda_{z_4}, \lambda_{z_1}+\lambda_{z_3}+\lambda_f, \lambda_{z_2}+\lambda_{z_4}+\lambda_f\big\}.\label{eq:DoF}
\end{align}
\end{theorem}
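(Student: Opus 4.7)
My plan is to split the argument into achievability and converse. The baseline term $1+\lambda_{i_1}+\lambda_{i_2}$ I would prove achievable by a naive topology-aware TDMA: in every topology a single message can be delivered, yielding $1$ DoF per slot, while in the parallel-link topologies $i_1$ and $i_2$ the two direct (resp. crossed) links are orthogonal and each carry one additional DoF. Since $\sum_a \lambda_a = 1$, this contributes $1+\lambda_{i_1}+\lambda_{i_2}$. The real work is to extract the extra $\min\{\,\cdot,\cdot,\cdot\,\}$ by \emph{coding across topologies}, using the Z-topologies and the fully connected topology $f$.

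For the coding gain, I would identify all elementary "coding opportunities": ordered multisets of topology slots in which a superposition of two fresh messages can be decoded at both receivers after a cancellation step that uses previously transmitted symbols as side information. Heuristically, in a $z_1$ slot Transmitter~$1$ can send a private symbol for Receiver~$1$ that Receiver~$2$ overhears for free; this overheard symbol becomes side information that, in a paired $z_4$ slot (complementary Z-structure), lets Transmitter~$2$ superimpose a new symbol for Receiver~$1$ on top of its intended Receiver~$2$ message, with Receiver~$2$ cancelling the old interference. A slot of type $f$ creates and consumes side information on its own and therefore acts as a "wildcard" that can substitute for either half of a $z_1$--$z_4$ pair, explaining why two of the three terms in the minimum involve $\lambda_f$. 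Enumerating all three feasible pairings and solving the resulting simple LP on the budget of slots of each type, invoking the symmetric-variation hypothesis $\lambda_{z_1}+\lambda_{z_4}=\lambda_{z_2}+\lambda_{z_3}$ to balance supply and demand on the two "sides", the optimal coding gain collapses to the min of the three linear combinations. For the scheme to work under arbitrary interleaving of topologies over $N$ slots, I would use a block-Markov construction over large windows, whose empirical topology frequencies concentrate on $(\lambda_a)$.

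For the converse, I would derive three matching upper bounds, one per term inside the min. My plan is to follow the genie-plus-cooperation template familiar from the TIM literature: enhance one receiver with a side signal that covers a specific collection of $z$- and $f$-slots (a different collection per bound), allow that receiver to also decode the other receiver's messages, and bound the enhanced sum-rate by a single-user capacity on the enhanced channel. Normalising by $\log P$ and weighting the per-topology contributions by $\lambda_a$ should yield the three linear bounds. The symmetric-variation condition is precisely what is needed so that the three genie choices align with the three achievability cases, producing a tight characterisation.

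The step I expect to be the main obstacle is the achievability, specifically the bookkeeping that lets a single unit of side information be created in one topology and consumed in another (possibly of a third type, via $f$) with no CSIT beyond the topology. Designing the block structure so that the counts of created and consumed side-information units match per block, while still respecting arbitrary topology orderings and the symmetric-variation constraint, is where I anticipate most of the technical work. Matching the converse tightly will then reduce to choosing the right genies, which I expect to be the easier half once the achievable scheme clarifies which three inequalities are simultaneously active.
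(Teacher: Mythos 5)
The gap is in your achievability, specifically in how the fully connected topology $f$ is exploited. You treat an $f$ slot as a ``wildcard'' that can substitute for either half of a complementary Z--Z pair, i.e., one $z$ slot plus one $f$ slot would deliver 3 symbols in 2 slots. This fails both operationally and against the converse. Operationally, in an $f$ slot \emph{both} receivers see both transmitters, so when the previously created side-information symbol is retransmitted alongside a fresh symbol, only the receiver that already holds that side information can cancel it; the other receiver is left with more unknowns than equations, and the pair degrades back to 2 symbols in 2 slots. Against the converse: with, say, $\lambda_{z_1}=\lambda_f=\tfrac12$ and all other fractions zero, the general upper bound of Theorem~2 gives $d_\Sigma\le 1+\tfrac{\lambda_{z_1}+\lambda_{z_2}+\lambda_{z_3}+\lambda_{z_4}}{2}=\tfrac54$, whereas the wildcard pairing would claim $\tfrac32$. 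The way $f$ actually contributes is through a \emph{three}-slot combination --- one surplus $z$ of each orientation plus one $f$ (the paper's $\{z_2,z_4,f\}$ or $\{z_1,z_3,f\}$) --- in which the $f$ slot retransmits two pieces of side information, one per receiver, delivering 4 symbols in 3 slots (DoF $\tfrac43$). It is precisely this $\tfrac43$ opportunity, together with the $\tfrac32$ opportunity obtained by pairing a Z topology with its receiver-swapped mirror ($\{z_1,z_2\}$ and $\{z_3,z_4\}$ in the paper's labeling) and the greedy order ``exhaust opportunity~1, then spend $f$ on the surpluses,'' that produces the $\lambda_f$ coefficients inside the min of the theorem. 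Without the three-slot scheme your catalogue of opportunities cannot reach the stated gain (e.g., for $\lambda_{z_1}=\lambda_{z_3}=\lambda_f=0.2$, symmetric, the gain $0.2$ is obtained \emph{only} through $\{z_1,z_3,f\}$), and with the wildcard it overshoots the converse; either way the claim that ``the LP collapses to the min of the three linear combinations'' does not go through as written.

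A secondary but real issue is that your converse sketch is under-specified for the bound $1+\lambda_{i_1}+\lambda_{i_2}+\lambda_{z_1}+\lambda_{z_4}$: a generic genie plus single-user bound is not enough. The paper's proof lets the transmitters cooperate (MISO BC), introduces statistically equivalent \emph{virtual} receivers (possible only because of no CSIT and i.i.d.\ fading), proves a reconstruction lemma (from the real and virtual Receiver-1 outputs in $z_1,z_4$ one can reconstruct Receiver~2's output up to bounded noise), uses the statistical equivalence of the two receivers' outputs in $b_1,b_2,f$, and invokes the symmetry hypothesis $\lambda_{z_1}+\lambda_{z_4}=\lambda_{z_2}+\lambda_{z_3}$ explicitly in the entropy bookkeeping --- that bound, not an ``alignment of genie choices with achievability cases,'' is where the symmetry assumption is actually consumed.
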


\begin{remark}
Other than the two interference-free topologies $i_1$ and $i_2$ (which achieves a sum-DoF of 2), all of the topologies are individually limited to a sum-DoF of 1 without CSIT~\cite{JS2008}. Hence when coding within each topology separately, the maximal achievable sum-DoF is $1+\lambda_{i_1}+\lambda_{i_2}$, and the last term in (\ref{eq:DoF}) represents the DoF gain by coding across topologies. $\hfill \square$
\end{remark}

\begin{remark}
A similar problem is solved in \cite{sun2013} with 4 admissible topologies $z_1$, $z_3$, $i_1$ and $f$. Incorporating more topologies into the problem creates new opportunities for the system to benefit from coding across topologies (see Section III). $\hfill \square$
\end{remark}

Without the symmetric constraint on the topology variation, we have the following lower and upper bounds on the sum-DoF:
\begin{theorem}
The sum-DoF of a $2 \times 2$ Gaussian X-channel, with the network topology varying over time across the topologies listed in Fig.~\ref{fig:top}, $d_{\Sigma}$ is bounded as 
\begin{align}
d_{\Sigma}  \geq & 1+\lambda_{i_1}+\lambda_{i_2}+ \min \{\lambda_{z_1}+\lambda_{z_4},\lambda_{z_2}+\lambda_{z_3}, \nonumber\\
&\lambda_{z_1}+\lambda_{z_3}+\lambda_{f}, \lambda_{z_2}+\lambda_{z_4}+\lambda_{f}\}, \label{eq:Alower}\\
d_{\Sigma}  \leq & 1+\lambda_{i_1}+\lambda_{i_2}+\min \{ \frac{\lambda_{z_1}+\lambda_{z_2}+\lambda_{z_3}+\lambda_{z_4}}{2}, \nonumber \\
&\lambda_{z_1}+\lambda_{z_3}+\lambda_{f},\lambda_{z_2}+\lambda_{z_4}+\lambda_{f} \}. \label{Aupper}
\end{align}
\end{theorem}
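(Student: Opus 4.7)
The plan is to prove the lower and upper bounds in Theorem 2 independently. For the \emph{lower bound}, the strategy is to exploit the coding-across-topologies (CAT) opportunities that will be identified in Section III for the $2\times 2$ X-channel with minimal CSIT. Each of the four quantities inside the minimum in (\ref{eq:Alower}) corresponds to an elementary CAT module that combines a small group of Z-type topology slots (possibly together with an $f$ slot) so as to deliver more than one degree of freedom per slot on average. For example, the term $\lambda_{z_1}+\lambda_{z_4}$ is achievable by a module that pairs a $z_1$ slot with a $z_4$ slot so that interference observed in one slot can be retrospectively cancelled in the other; analogous modules exploit the $(z_2,z_3)$ pair and the triples $(z_1,z_3,f)$ and $(z_2,z_4,f)$. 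First I would catalog these elementary modules and compute their per-use DoF contribution; next I would time-share them with nonnegative weights $\mu_k$ subject to the budget constraints $\sum_k \mu_k \mathbf{1}\{\text{module } k \text{ uses topology } a\} \leq \lambda_a$ for every $a\in\mathcal A$, obtaining a small linear program whose optimum equals the right-hand side of (\ref{eq:Alower}). The baseline $1+\lambda_{i_1}+\lambda_{i_2}$ is recovered by standard TDMA in the remaining slots, as already noted in the remark following Theorem 1.

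For the \emph{upper bound}, I would prove each of the three expressions inside the minimum in (\ref{Aupper}) as a separate converse and then take the tightest. The bounds $\lambda_{z_1}+\lambda_{z_3}+\lambda_{f}$ and $\lambda_{z_2}+\lambda_{z_4}+\lambda_{f}$ correspond to the topologies in which a designated receiver is connected to both transmitters: focusing on that receiver, handing it the other receiver's two messages as genie side information, and applying Fano's inequality reduces those slots to a MAC-like sub-channel with no CSIT whose sum-DoF is at most one per slot (as in \cite{JS2008}), while the interference-free topologies $i_1,i_2$ contribute two DoF per slot. The factor-of-$\tfrac12$ bound $\tfrac12(\lambda_{z_1}+\lambda_{z_2}+\lambda_{z_3}+\lambda_{z_4})$ is obtained by summing the two single-receiver bounds after swapping the roles of the two receivers, exploiting the statistical symmetry of the i.i.d.\ channel coefficients and the independence of the four messages; the resulting inequality upper-bounds $2 d_\Sigma$, and dividing by two yields the stated half-sum.

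I expect the main obstacle to be the $\tfrac12$-bound in the converse. Constructing a genie that simultaneously decouples the two receivers' decoding tasks while respecting the no-CSIT assumption requires that the statistical-equivalence argument commute with the topology-dependent signal structure, and the Fano-slack bookkeeping across $N\to\infty$ time slots with randomly varying topologies will need to be handled carefully so that the resulting bound is uniform in the topology realization. A secondary subtlety is verifying that the linear-program value in the achievability step indeed equals the minimum of the four candidate terms rather than merely their maximum, which requires exhibiting a single feasible time-sharing that saturates whichever bottleneck is tightest.
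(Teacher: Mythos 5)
Your achievability plan rests on a misidentified set of coding modules, and this is a genuine gap. The paper's coding opportunity~1 pairs $z_1$ with $z_2$ (and $z_3$ with $z_4$): these are the two Z-topologies in which the \emph{fully-connected receiver alternates}, so the symbol that interferes at one receiver in the first slot arrives cleanly (or can be re-heard) in the second slot and can be retrospectively cancelled, giving 3 symbols per 2 slots. Your proposed module instead pairs $z_1$ with $z_4$, i.e., two slots in which the \emph{same} receiver hears both transmitters while the other receiver hears a single (different) transmitter in each slot. Without CSIT there is no way to cancel the interference at the fully-connected receiver across such a pair, and no DoF gain is known for a $\{z_1,z_4\}$-only mixture; indeed, if $\lambda_{z_2}=\lambda_{z_3}=\lambda_f=0$ the lower bound (\ref{eq:Alower}) collapses to $1$ while the upper bound (\ref{Aupper}) gives $1+\tfrac{\lambda_{z_1}+\lambda_{z_4}}{2}$ --- whether your claimed module exists is precisely the unresolved gap of Theorem~2, so it cannot be an ingredient of its proof. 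The terms $\lambda_{z_1}+\lambda_{z_4}$ and $\lambda_{z_2}+\lambda_{z_3}$ in (\ref{eq:Alower}) are not per-module gains but bottleneck values of the greedy combination of the correct modules: the gain from opportunity~1 is $\min\{\lambda_{z_1},\lambda_{z_2}\}+\min\{\lambda_{z_3},\lambda_{z_4}\}$ and the gain from opportunity~2 (the $\{z_1,z_3,f\}$, $\{z_2,z_4,f\}$ triples, which you did identify correctly) is $\min\{|\lambda_{z_1}-\lambda_{z_2}|,|\lambda_{z_3}-\lambda_{z_4}|,\lambda_f\}$ when the surpluses have consistent signs; only after this case analysis does the minimum of the four expressions emerge. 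Your LP framing is fine in principle, but populated with the $(z_1,z_4)$ and $(z_2,z_3)$ pairs its value would not be achievable.

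On the converse, your outline for the bounds $\lambda_{z_1}+\lambda_{z_3}+\lambda_f$ and $\lambda_{z_2}+\lambda_{z_4}+\lambda_f$ is closer in spirit but not what the paper does: the paper gives each receiver only \emph{one} genie message ($W_{11}$ at Receiver~2, $W_{22}$ at Receiver~1) and relies on the statistical equivalence of $Y_{1,a}$ and $Y_{2,a}$ for $a\in\{b_1,b_2,f\}$ so that negative entropy terms from one receiver's bound cancel positive terms in the other's; a blanket "hand over both messages and count one DoF per connected slot" accounting does not isolate the specific combination $\lambda_{z_1}+\lambda_{z_3}+\lambda_f$. More seriously, your sketch of the half-sum bound in (\ref{eq:Abond1}) --- "sum the two single-receiver bounds with swapped roles and divide by two" --- omits the actual mechanism: the paper first enhances the channel by letting the transmitters cooperate (turning the X-channel into a two-user MISO broadcast channel), introduces virtual receivers $\tilde 1,\tilde 2$ that are statistically indistinguishable from the true ones, and invokes a reconstruction lemma (Lemma~1) showing that in topologies $z_1,z_4$ the signal of Receiver~2 can be recovered, up to noise, from the signals of Receiver~1 and its virtual copy. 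It is this doubling of each receiver, combined with the reconstruction step, that produces the factor $\tfrac12$ multiplying only the Z-topology fractions; you correctly flag this bound as the main obstacle, but the proposal does not supply the idea needed to overcome it.
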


\begin{remark}
In general, the lower and upper bounds in Theorem~2 do not match. However, as demonstrated in Table~\ref{tab:gap} below, the average gap between the lower and the upper bounds, evaluated over 10000 randomly chosen $\{\lambda_{z_1},\lambda_{z_2},\lambda_{z_3},\lambda_{z_4},\lambda_f\}$, is rather small compared to the achieved sum-DoF in (\ref{eq:Alower}). $\hfill \square$
\end{remark}

\begin{table}[h]
	\caption{Gaps between lower and upper bounds of the sum-DoF of an X-channel with varying topologies}
	\label{tab:gap}
	\centering
    \begin{tabular}{| c | c | c |}
    \hline
    $\lambda_{z_1}\!+\!\lambda_{z_2}\!+\!\lambda_{z_3}\!+\!\lambda_{z_4}\!+\!\lambda_f$ & Maximum Gap & Average Gap \\ \hline
   0.2 & 0.0887 & 0.0174 \\ \hline
   0.5 & 0.2164 & 0.0436 \\ \hline
   0.8 & 0.3515 & 0.0702 \\ \hline
   \end{tabular}
\end{table}

\begin{remark}
Using the proposed coding scheme in Theorem~2 (described in Section~III and Appendix~II) on the simulated system whose topology variation is specified in Table~I achieves a DoF gain of 9.47\% over the TDMA scheme. To see this:
\begin{enumerate}
\item We apply the proposed coding scheme respectively to the X-channel containing the 5 topologies at the top of Table~I and another X-channel containing the 5 topologies at the bottom of Table~I.
\item Then we use the normalized time fractions in Table~I to calculate the DoF gains over TDMA (i.e., the last term in~(\ref{eq:Alower})) in each of the X-channels.  That is 4.21\% for the top X-channel and 5.26\% for the bottom X-channel.
\item We sum up these two DoF gains and compare with the DoF achieved by the TDMA scheme (=1 because $\lambda_{i_1}+\lambda_{i_2} = 0$ for both X-channels in the simulation results) to obtain an overall DoF gain of 9.47\%. $\hfill \square$
\end{enumerate} 
\end{remark}

When channel coefficients are also i.i.d. varying over time, the following result characterizes the ergodic sum-rate achieved by our proposed coding scheme:   
\begin{theorem}
For the $2 \times 2$ Gaussian X-channel with symmetric topology variation (defined in Definition~2) and i.i.d. (across space and time) channel coefficients, we define ${\phi \overset{\Delta}{=} \text{min}\{|\lambda_{z_1}\!-\!\lambda_{z_2}|, \lambda_f\}}$, and an ergodic sum-rate $R_\Sigma$ is achievable if
\begin{align}
R_{\Sigma} \leq & A\left[\sum \limits_{k=1}^4 \lambda_{s_k} + 2(\lambda_{i_1}+\lambda_{i_2})+\lambda_{b_1}+\lambda_{b_2}\right] \nonumber \\
&+ \!B \big[\lambda_{m_1}\!+\!\lambda_{m_2}\!+\!|\lambda_{z_1}\!-\!\lambda_{z_2}|\!+\!|\lambda_{z_3}\!-\!\lambda_{z_4}|\!+\!\lambda_f -3\phi \big] \nonumber\\
&+ \!C\left[\text{min}\{\lambda_{z_1},\lambda_{z_2}\} + \text{min}\{\lambda_{z_3},\lambda_{z_4}\}\right] +D\phi,
\end{align} 
where 
\begin{align*}
A \!=& \mathbb{E}\left\{\log(1+|h_{11}|^2P)\right\}, \\
B \!=& \mathbb{E}\left\{\log(1+|h_{11}|^2P+|h_{12}|^2P)\right\},\\
C \!=& \mathbb{E}\left\{\log\left(\left|\mathbf{I} + P \mathbf{V}\mathbf{V}^H\right|\right)\right\} \\
&+ \mathbb{E}\left\{\log\left(1+\frac{|h_{21}(2)|^2 P}{1+|h_{22}(2)|^2\slash |h_{22}(1)|^2}\right)\right\},\\
D \!=& 2\mathbb{E}\left \{\! \log \!\! \left(\frac{\Bigg | \begin{bmatrix}1 & 0\\ 0 & 1+|h_{12}(3)|^2 \slash |h_{12}(1)|^2 
\end{bmatrix}\!\!+\! P \mathbf{U}\mathbf{U}^H\Bigg |}{1+|h_{12}(3)|^2 \slash |h_{12}(1)|^2}\right)\! \right\},\\
\mathbf{V} \!=& \begin{bmatrix}
h_{11}(1) & h_{12}(1)\\ 0 & h_{12}(2)
\end{bmatrix},\quad
\mathbf{U} \!= \begin{bmatrix}
h_{12}(2) & h_{11}(2) \\ 0 & h_{11}(3) \end{bmatrix},
\end{align*}
with all expectations taken over random channel coefficients.
\end{theorem}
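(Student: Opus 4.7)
The plan is to construct an achievable coding scheme by partitioning the $N$ time slots into disjoint coding blocks indexed by the realized topologies, applying a tailored code in each block, and then invoking the strong law of large numbers as $N\to\infty$ so that the fraction of slots in topology $a$ converges to $\lambda_a$ and the empirical per-block rates converge to the stated expectations $A,B,C,D$. The partition uses three block lengths. \emph{Length-one blocks}: every slot in topology $s_k$ or $b_k$ is coded in isolation at rate $A$ (point-to-point coding; for $b_k$ using TDMA inside the non-coherent broadcast), every slot in $i_1$ or $i_2$ at rate $2A$ via two parallel interference-free links, and every slot in $m_1$ or $m_2$ at MAC sum-rate $B$. \emph{Length-two blocks}: the symmetric condition $\lambda_{z_1}\!+\!\lambda_{z_4}=\lambda_{z_2}\!+\!\lambda_{z_3}$ implies $|\lambda_{z_1}\!-\!\lambda_{z_2}|=|\lambda_{z_3}\!-\!\lambda_{z_4}|$, so we greedily form $\min\{\lambda_{z_1},\lambda_{z_2}\}N$ complementary $(z_1,z_2)$-pairs and $\min\{\lambda_{z_3},\lambda_{z_4}\}N$ pairs of $(z_3,z_4)$. \emph{Length-three blocks}: from the unpaired leftover $z$-slots (equal in number on each side by symmetry) together with the $\lambda_f N$ slots of $f$, we form $\phi N$ triples, each containing one leftover $z$-slot of each type and one $f$-slot. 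Whatever slots remain (totaling the $|\lambda_{z_1}\!-\!\lambda_{z_2}|+|\lambda_{z_3}\!-\!\lambda_{z_4}|+\lambda_f-3\phi$ fraction appearing in the coefficient of $B$) are coded individually at rate $B$ by having the receiver that sees two active transmitters perform MAC decoding.

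Inside a length-two $(z_1,z_2)$-block, choosing orientations so that Transmitter~2 is active into Receiver~1 in both slots, Transmitter~2 repeats a single codeword $X_2$ across the two slots while Transmitter~1 sends a fresh codeword each slot. Receiver~1 then observes a $2\!\times\!2$ MIMO-MAC whose coefficient matrix is precisely $\mathbf{V}$, and decodes two streams at rate $\log|\mathbf{I}+P\mathbf{V}\mathbf{V}^H|$. Receiver~2's first-slot observation contains only Tx2's signal; rescaling it by $h_{22}(2)/h_{22}(1)$ and subtracting from its second-slot observation cancels the Tx2-interference, leaving a clean Tx1 signal at effective SNR $|h_{21}(2)|^2 P/(1+|h_{22}(2)|^2/|h_{22}(1)|^2)$, and summing these two contributions and averaging over the channel gives $C$. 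The length-three block is an elaboration of the same idea: the two flanking leftover $z$-slots supply each receiver with the clean reference observation it needs to cancel the cross-interference generated in the $f$-slot, after which each receiver sees a $2\!\times\!2$ MIMO channel of the form $\mathbf{U}$ with the noise inflation factor $1+|h_{12}(3)|^2/|h_{12}(1)|^2$ from the subtraction; decoding two streams simultaneously at the two receivers accounts for the prefactor of $2$ in $D$.

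The main obstacle will be the combinatorial design of the length-three block: one has to specify precisely which symbols are repeated in which of the three slots so that, \emph{simultaneously}, both receivers end up with a $\mathbf{U}$-structured effective channel and the required clean reference observation for interference cancellation, and one has to verify that the symmetric-topology condition indeed guarantees the leftover $z$-slots of the complementary orientations needed to pair with each $f$-slot, which produces exactly the $\phi=\min\{|\lambda_{z_1}-\lambda_{z_2}|,\lambda_f\}$ limit on the number of available triples. Once the per-block rates are shown to be achievable via standard random-coding arguments with vanishing error probability as the block coding length within each group of slots grows, summing the rates weighted by block counts and letting $N\to\infty$ yields the ergodic rate in the theorem.
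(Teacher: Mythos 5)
Your proposal is correct and follows essentially the same route as the paper: the same partition into single-slot topologies coded at rates $A$ and $B$, the same $\min\{\lambda_{z_1},\lambda_{z_2}\}N+\min\{\lambda_{z_3},\lambda_{z_4}\}N$ two-slot $\{z_1,z_2\}$/$\{z_3,z_4\}$ blocks with repetition at Transmitter~2, MIMO decoding through $\mathbf{V}$ at Receiver~1 and cancel-then-decode at Receiver~2 giving $C$, and the same $\phi N$ three-slot blocks where the cleanly-heard symbols are repeated in the $f$-slot so each receiver cancels interference and decodes through the $\mathbf{U}$-structured channel with noise inflation $1+|h_{12}(3)|^2/|h_{12}(1)|^2$, giving $D$. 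The "obstacle" you flag is exactly the paper's coding-opportunity-2 construction, and the symmetric condition $\lambda_{z_1}+\lambda_{z_4}=\lambda_{z_2}+\lambda_{z_3}$ indeed forces the leftover $z$-slots to have complementary orientations in equal numbers, so your instance counting and the resulting rate expression coincide with the paper's.
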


Furthermore, we demonstrate via the following corollary of Theorem~3 that our coding scheme approximately achieves the ergodic sum-capacity of the $2\times 2$ Gaussian X-channel with symmetric topology variation, for uniform-phase, Rayleigh and Rice fading channel coefficients.

\begin{corollary}
The ergodic sum-capacity of the $2\times 2$ Gaussian X-channel with symmetric topology variation and i.i.d. (across space and time) channel coefficients is achievable to within a constant gap for one of the following channel distributions if $\lambda_{z_1} = \lambda_{z_2}$ or $\lambda_f \leq |\lambda_{z_1}-\lambda_{z_2}|$:
\begin{itemize}
\vspace{-1.5 mm}
\item $h_{ik} = e^{j\theta_{ik}}$, and $\theta_{ik}$ is uniformly distributed over $[0,2\pi)$, $i,k = 1,2$,
\item Rayleigh fading with $\mathbb{E}\{h_{ik}\}=0$ and $\mathbb{E}\{|h_{ik}|^2\}=1$, $i,k=1,2$,
\item Rice fading with Rice factor $K_r=1$.
\end{itemize} 
\end{corollary}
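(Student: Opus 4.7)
The plan is to combine the achievability of Theorem~3 with a matching ergodic sum-capacity upper bound and to show that their gap is a constant independent of $P$. Since the achievable expression in Theorem~3 is linear in the topology fractions $\lambda_a$, I would build a topology-wise ergodic upper bound in the same basis and then compare coefficient by coefficient.

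First I would derive a topology-conditioned upper bound $C_\Sigma \leq \sum_{a\in\mathcal{A}}\lambda_a\,\bar{C}_a$, where $\bar{C}_a$ is an ergodic per-slot cut-set/MAC bound for topology $a$. Standard arguments give $\bar{C}_{s_k}=A$, $\bar{C}_{i_k}=2A$, and $\bar{C}_a \leq B + O(1)$ for every multi-link topology whose single-slot sum-DoF is at most one (the no-CSIT fact from \cite{JS2008} invoked in the remark after Theorem~1). Using the symmetric-topology identity $|\lambda_{z_1}-\lambda_{z_2}|=|\lambda_{z_3}-\lambda_{z_4}|$ together with the corollary's hypothesis, $\phi$ collapses to either $0$ (when $\lambda_{z_1}=\lambda_{z_2}$) or $\lambda_f$ (when $\lambda_f\leq|\lambda_{z_1}-\lambda_{z_2}|$). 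In either sub-case, the weights of $A$, $B$, $C$, $D$ in Theorem~3 sum topology-by-topology to the corresponding $\lambda_a$'s in the upper bound, so the problem reduces to bounding the gap inside each coefficient slot.

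The $A$- and $2A$-coefficients match exactly, and the $B$-coefficient gap is at most $\log 2$ from the power inequality $|h_{11}|^2+|h_{12}|^2 \leq 2\max\{|h_{11}|,|h_{12}|\}^2$. For $C$ and $D$, expanding the $\log\det$ terms at high SNR as $2\log P + \log(|\det\mathbf{V}|^2) + o(1)$, and analogously for $\mathbf{U}$, shows that each falls short of the natural no-interference MIMO upper bound only by finite expressions of the form $\mathbb{E}\{\log(1+|h_a|^2/|h_b|^2)\}$ and $\mathbb{E}\{-\log|\det\mathbf{V}|^2\}$; these are precisely the cross-slot interference residuals of the CAT scheme.

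The main obstacle is showing that these residual expectations are finite, uniformly in $P$, for each of the three specified distributions. For uniform-phase channels $|h|\equiv 1$, so every ratio equals $1$ and the reciprocal-log moments are bounded. For Rayleigh with unit mean, $|h_a|^2/|h_b|^2$ is $F_{2,2}$-distributed and $\mathbb{E}\{\log(1+F)\}$ admits a closed form, while $\mathbb{E}\{-\log|h|^2\}=\gamma$ is finite, so $\mathbb{E}\{-\log|\det\mathbf{V}|^2\}$ is finite as well. For Rice with $K_r=1$, the squared magnitudes are non-central chi-squared whose densities are bounded near zero, so the ratio and reciprocal-log moments remain finite, either by direct evaluation via the non-central chi-squared density or by a dominated-convergence comparison with the Rayleigh case. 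Summing these four constants yields the claimed constant-gap bound.
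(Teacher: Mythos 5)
Your first step---the topology-conditioned bound $C_\Sigma \leq \sum_{a\in\mathcal{A}}\lambda_a\,\bar{C}_a$ with $\bar{C}_a \leq B + O(1)$ for every multi-link topology of single-slot sum-DoF one---is not a valid converse, and this invalidates the argument as written. Time-sharing of per-topology no-CSIT capacities is exactly what coding across topologies beats: under $\lambda_{z_1}=\lambda_{z_2}$, the Theorem~3 scheme earns $C\approx 3\log P+O(1)$ per pair of slots spent in $\{z_1,z_2\}$, i.e.\ roughly $\tfrac{3}{2}\log P$ per slot, while your bound caps those slots at $B\approx\log P$ per slot; so the proposed ``upper bound'' sits below the achievable rate you are trying to match (equivalently, Theorem~1 gives sum-DoF $1+\lambda_{i_1}+\lambda_{i_2}+\min\{\cdot\}$, strictly larger than the time-sharing value $1+\lambda_{i_1}+\lambda_{i_2}$ whenever the min is positive). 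The natural repair---taking $\bar{C}_a$ to be the full-cooperation (or full-CSIT) sum-capacity of topology $a$---does give a valid bound, but then every $z$- and $f$-slot contributes $2\log P+O(1)$, and the gap to $R_\Sigma$ grows like a $\lambda$-weighted multiple of $\log P$ rather than a constant. There is no per-topology cut-set/MAC decomposition that is simultaneously valid and constant-gap tight here; the converse must couple the topologies.

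The paper obtains the needed bounds $U_1,U_2,U_3$ in (\ref{eq:EU1})--(\ref{eq:EU3}) by redoing the Theorem~1 converse at finite SNR: enhancing to a MISO broadcast channel, adjoining virtual receivers that are statistically indistinguishable under no CSIT, exploiting the statistical equivalence of the outputs in $b_1,b_2,f$, and using Lemma~1 to reconstruct the unheard receiver's signal in the $z$-topologies, while explicitly evaluating all $o(\log P)$ residuals (which is where the terms $E$, $F$ and the $\log(\pi e)$ constants come from). Only after that does the problem reduce to bounding $A+2B-C+E$, $4B-D$, and $F$ uniformly in $P$, and there your third stage is essentially right: the unit-modulus case is immediate, the Rayleigh case reduces to finite log-moments of ratios of exponentials, and the Rice case follows because the non-central chi-squared densities are bounded near zero---this matches the paper's gap-evaluation step. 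So the distribution-specific finiteness arguments are fine, but you are missing the genie-aided, cross-topology converse that makes the comparison meaningful; also, in the correct bounds the $z$-topology slots carry weight exceeding $B$ (they pick up $A$ and $E$ or $F$ contributions as well), so your coefficient-by-coefficient matching with a $\log 2$ slack does not reflect the actual structure.
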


We present the achievable scheme of Theorem~1 in Section~III and demonstrate that the proposed scheme is indeed optimal by proving the converse of Theorem~1 in Section~IV and Appendix~I. In Appendix~II we provide a sketch of the proof of Theorem~2, which extends the results of Theorem~1 to the general asymmetric topology variation setting. Finally, Theorem~3 and Corollary~1 are proved in Section~V to demonstrate the rate performance of the proposed scheme for fading channels.
 
\section{Achievability of Theorem 1}\label{sec:achieve}
A \emph{coding opportunity} is a combination of topologies, coding across which achieves a DoF gain over coding in each topology separately. In this section, we first identify two coding opportunities for the $2 \times 2$ X-channel with varying topologies, and then state our achievable scheme, which optimally creates and exploits the two coding opportunities to achieve the sum-DoF of Theorem~1. For ease of exposition, we will ignore noise terms in the DoF analysis. We also note that, by assumption, all channel coefficients are non-zero with probability 1 at any time.

\subsection{Coding Opportunity 1: $\{z_1,z_2\}$, $\{z_3,z_4\}$}\label{sec:coding1}
We illustrate in Fig.~\ref{fig:CO1} a two-phase transmission strategy for the topology combination $\{z_1,z_2\}$ to achieve a sum-DoF of $\frac{3}{2}$, and note that the same scheme can be applied to topology combination $\{z_3,z_4\}$ by relabelling the transmitters and receivers. 

\begin{figure}[htbp]
   \centering
   \includegraphics[width=0.45\textwidth]{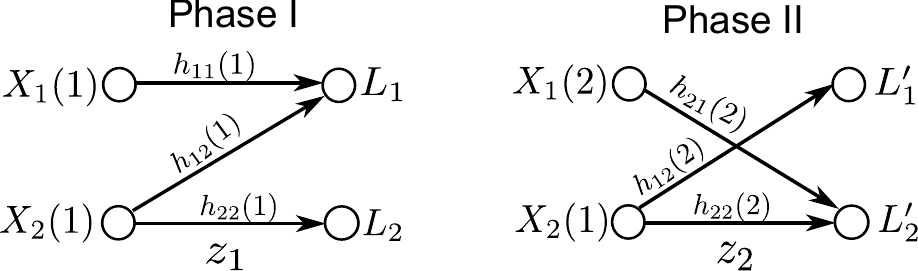}
   \caption{Illustration of a $\frac{3}{2}$-DoF coding scheme for topology combination $\{z_1,z_2\}$. Receiver 1 decodes $X_1(1)$ and $X_2(1)$ from two linearly independent combinations $(L_1,L_1')$ of $(X_1(1),X_2(1))$, and Receiver 2 decodes $X_1(2)$ from two linearly independent combinations $(L_2,L_2')$ of $(X_1(2),X_2(1))$.}
   \label{fig:CO1}
\end{figure}

Receiver 1 uses $\left(L_1,L_1'\right)$ to decode $X_1(1)$ and $X_2(1)$ (always decodable because all channel gains are almost surely non-zero); Receiver 2 uses $L_2$ to cancel $X_2(1)$ from $L_2'$, and then decodes $X_1(2)$. In total, 3 symbols are decoded in 2 time slots, achieving a sum-DoF of $\frac{3}{2}$. 

\subsection{Coding Opportunity 2: $\{z_2, z_4, f\}$, $\{z_1, z_3, f\}$} \label{sec:coding2}
Next, we illustrate in Fig.~\ref{fig:CO2} a three-phase transmission strategy for the topology combination $\{z_2,z_4,f\}$ (and $\{z_1,z_3,f\}$ by relabelling the receivers) to achieve a sum-DoF of $\frac{4}{3}$. 

Receiver 1 uses $L_1$ to cancel $X_2(1)$ from $L_1''$, obtaining a residual signal $\tilde{L}_1''$, and decodes $X_1(2)$ and $X_2(2)$ using $L_1'$ and $\tilde{L}_1''$. Receiver 2 uses $L_2'$ to cancel $X_1(2)$ from $L_2''$, obtaining a residual signal $\tilde{L}_2''$, and decodes $X_{1}(1)$ and $X_2(1)$ using $L_2$ and $\tilde{L}_2''$. Overall, 4 symbols are decoded in 3 time slots, achieving a sum-DoF of $\frac{4}{3}$. 

\begin{figure}[htbp]
   \centering
   \includegraphics[width=0.45\textwidth]{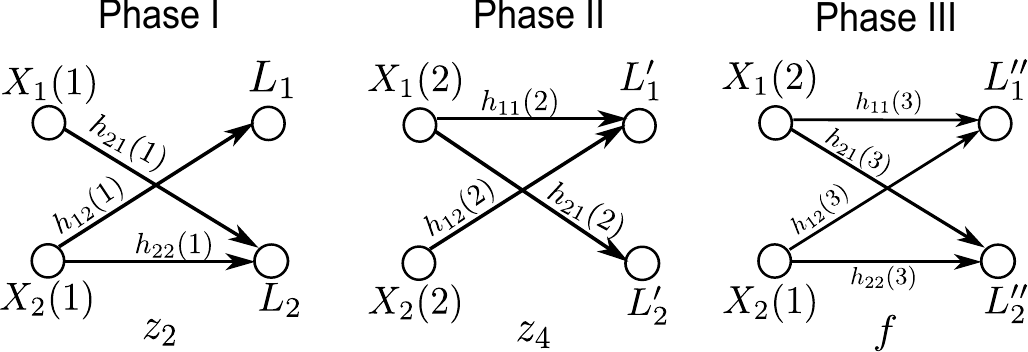}
   \caption{Illustration of a $\frac{4}{3}$-DoF coding scheme for topology combination $\{z_2,z_4,f\}$. Receiver 1 decodes $X_1(2)$ and $X_2(2)$ from 3 linearly independent combinations $(L_1,L_1',L_1'')$ of $(X_2(1),X_1(2),X_2(2))$, and Receiver 2 decodes $X_1(1)$ and $X_2(1)$ from 3 linearly independent combinations $(L_2,L_2',L_2'')$ of $(X_1(1),X_2(1),X_1(2))$.}
   \label{fig:CO2}
\end{figure}

\begin{remark}
The scheme for this coding opportunity was introduced for a binary fading interference channel, in which the channel links are either ``on'' or ``off'' \cite{vahid2013}. The same concept was used in \cite{sun2013} to improve the sum-DoF of the two-user interference channel and X-channel with alternating topology. Finally in \cite{issa2013}, the authors demonstrated that for the two-hop interference channel, the best vector-linear strategy at the intermediate relays is to vary relaying coefficients over  time, creating end-to-end topologies equivalent to the three used in coding opportunity 2. $\hfill \square$
\end{remark}

\subsection{Coding Scheme for $2 \times 2$ X-channel with Varying Topologies}
Equipped with the transmission strategies for the two coding opportunities, we proceed to present our achievable scheme that focuses on the symmetric topology variation as defined in Definition~2. The general idea of the achievable scheme is to create and utilize the two coding opportunities in a greedy manner to maximize the overall sum-DoF.

First, we code in all instances of interference-free topologies $i_1$ and $i_2$ separately at DoF $2$, delivering a total of $2(\lambda_{i_1}+\lambda_{i_2})N$ symbols. Then we split the description of the rest of the coding scheme into the following two cases:

\noindent \emph{Case 1: $\lambda_{z_1} \leq \lambda_{z_2}$ and $\lambda_{z_3} \leq \lambda_{z_4}$}
\begin{enumerate}[i.]
\item We pair each $z_1 (z_3)$ instance with a $z_2(z_4)$ instance until $z_1 (z_3)$ is exhausted, obtaining $(\lambda_{z_1}+\lambda_{z_3})N$ instances of coding opportunity 1. For each paired $\{z_1,z_2\}$, we perform the \mbox{$\frac{3}{2}$-DoF} scheme in Section~\ref{sec:coding1}, delivering $3\lambda_1N$ symbols. Similarly, another $3\lambda_{z_3}N$ symbols are delivered by performing coding opportunity 1 scheme on all $\{z_3,z_4\}$ pairs.   

\item We pair one of  $(\lambda_{z_2}-\lambda_{z_1})N$ surplus $z_2$ instances with one of  $(\lambda_{z_4}-\lambda_{z_3})N$ surplus $z_4$ instances, and one of $\lambda_fN$ $f$ instances until one of them is exhausted, obtaining $\theta N$ instances of coding opportunity 2, where $\theta \overset{\Delta}{=}\text{min}\{\lambda_{z_2}-\lambda_{z_1},\lambda_{z_4}-\lambda_{z_3},\lambda_f\}$. Applying the $\frac{4}{3}$-DoF scheme in Section~\ref{sec:coding2} to each coding opportunity 2 instance, we deliver $4\theta N$ symbols.

\item Lastly, for each of $(\lambda_{z_2}\!-\!\lambda_{z_1}\!-\theta)N$ surplus $z_2$ instances, $(\lambda_{z_4}\!-\!\lambda_{z_3}\!-\!\theta)N$ surplus $z_4$ instances, $(\lambda_{f}\!-\!\theta)N$ surplus $f$ instances, and all the remaining topology instances, we code them separately, delivering 1 symbol in each of them.     
\end{enumerate}

Thus we achieve the overall sum-DoF:
\begin{align}
&3(\lambda_{z_1}\!+\!\lambda_{z_3}) \!+\! 4\theta \!+\! (\lambda_{z_2}\!-\!\lambda_{z_1}\!-\!\theta) \!+\! (\lambda_{z_4}\!-\!\lambda_{z_3}\!-\!\theta) \!+\! (\lambda_f \!-\!\theta) \nonumber \\
&+\! \sum\limits_{k=1}^{4}\lambda_{s_k} \!+\! \sum\limits_{k=1}^{2}(\lambda_{m_k}\!+\!\lambda_{b_k}\!+\! 2\lambda_{i_k})\nonumber\\
=& 1+ \lambda_{i_1}+\lambda_{i_2}+\text{min}\{\lambda_{z_2}+\lambda_{z_3},\lambda_{z_1}+\lambda_{z_3}+\lambda_f\}.
\end{align}

\begin{figure}[htbp]
   \centering
   \includegraphics[width=0.48\textwidth]{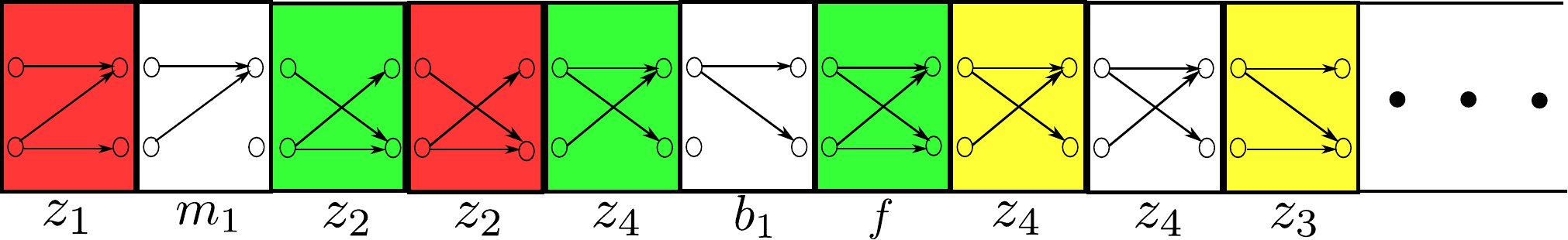}
   \caption{Coding scheme applied to an example realization of time-varying network topologies for Case 1. Coding opportunity 1 is colored red for $\{z_1,z_2\}$, and yellow for $\{z_3,z_4\}$; Coding opportunity 2 is colored green, and all  topologies coded separately are left blank. Notice that in this example, no more future topology instance has Topology $z_1$ or $z_3$.}
   \label{fig:coding_str1}
\end{figure}

\noindent \emph{Case 2: $\lambda_{z_1} > \lambda_{z_2}$ and $\lambda_{z_3} > \lambda_{z_4}$}
The achievable scheme is similar to the first case, whereas topology instances of $z_1$ and $z_3$ are surplus from Step i. and the instances of combination $\{z_1,z_3,f\}$ are exhaustively consumed using the scheme for coding opportunity 2. The achieved sum-DoF in this case is $1\!+\! \lambda_{i_1} \!+\! \lambda_{i_2}\!+\!\text{min}\{\lambda_{z_1}+\lambda_{z_4},\lambda_{z_2}+\lambda_{z_4}+\lambda_f\}$.

Therefore in general, the achieved sum-DoF when the topology varies symmetrically (${\lambda_{z_1}+\lambda_{z_4} = \lambda_{z_2}+\lambda_{z_3}}$), $d_{\Sigma}$ is 
\begin{align}
 &1\! + \! \lambda_{i_1} \!+ \!\lambda_{i_2} \nonumber\\
 &+\!\text{min}\bigg\{ \lambda_{z_1}\!+\!\lambda_{z_4},\text{min}\{\lambda_{z_1}\!,\lambda_{z_2}\}\!+\!\text{min}\{\lambda_{z_3}\!,\lambda_{z_4}\}\!+\!\lambda_f  \bigg\}\nonumber\\
=& 1 \!+\! \lambda_{i_1}\!+\!\lambda_{i_2} \!+\! \text{min}\big\{\!\lambda_{z_1}\!+\!\lambda_{z_4},\lambda_{z_1}\!+\!\lambda_{z_3}\!+\!\lambda_f, \lambda_{z_2}\!+\!\lambda_{z_4}\!+\!\lambda_f \! \big\}.\label{eq:achieved_DoF}
\end{align}

\begin{remark}
As mentioned in Remark~2, when the variation of the topology is symmetric, the overall DoF gain by coding across topologies is $\min\big\{\!\lambda_{z_1}\!+\!\lambda_{z_4},\lambda_{z_1}\!+\!\lambda_{z_3}\!+\!\lambda_f, \lambda_{z_2}\!+\!\lambda_{z_4}\!+\!\lambda_f \! \big\}$. From the above performance analysis of our proposed scheme, we find that the DoF gain due to coding opportunity 1 is $\min\{\lambda_{z_1},\lambda_{z_2}\}+\min\{\lambda_{z_3},\lambda_{z_4}\} = \min\{\lambda_{z_1}+\lambda_{z_3},\lambda_{z_2}+\lambda_{z_4}\}$, and the DoF gain due to coding opportunity 2 is $\min\{|\lambda_{z_1}-\lambda_{z_2}|,|\lambda_{z_3}-\lambda_{z_4}|,\lambda_f\}$. $\hfill \square$
\end{remark}

\section{Converse of Theorem 1}
To prove the optimality of the achieved sum-DoF in (\ref{eq:achieved_DoF}), we will show in this section that the sum-DoF $d_{\Sigma}$ is bounded as
\begin{align}
d_{\Sigma} & \leq 1+ \lambda_{i_1} + \lambda_{i_2}+\lambda_{z_1}+\lambda_{z_4},\label{eq:bond1}\\
d_{\Sigma} & \leq 1+ \lambda_{i_1} + \lambda_{i_2}+\lambda_{z_1}+\lambda_{z_3}+\lambda_f,\label{eq:bond2}\\
d_{\Sigma} & \leq 1+ \lambda_{i_1} + \lambda_{i_2}+\lambda_{z_2}+\lambda_{z_4}+\lambda_f.\label{eq:bond3}
\end{align} 

We point out that taking the minimum of these three bounds results in the expression in (\ref{eq:achieved_DoF}), therefore proving these three bounds proves the converse. We will present the proof of (\ref{eq:bond1}) in this section and defer the proofs for (\ref{eq:bond2}) and (\ref{eq:bond3}) to Appendix~I.

We define the length-$N$ output vector at Receiver $j$ as $Y_j^{N}$, $j = 1,2$, and the sub-vector of $Y_{j}^N$ received when the network is in Topology $a$ as $Y_{j,a}^{N}$. The received vector at Receiver 1 may thus be expressed as $Y_{1}^N = \big(\{Y_{1,s_k}^N\}_{k=1}^4,Y_{1,m_1}^N, Y_{1,m_2}^N, Y_{1,b_1}^N, Y_{1,b_2}^N, \{Y_{1,z_k}^N\}_{k=1}^4,Y_{1,i_1}^N,$ $Y_{1,i_2}^N, Y_{1,f}^N\big)$.

Further, for any subset of topology indices, $\mathcal{S}\subseteq\mathcal{A}$, $Y_{j,\mathcal{S}}^{N} \overset{\Delta}{=} \{Y_{j,a}^N: a \in \mathcal{S}\}$. 
In particular, we define the set of topologies $\mathcal{B}\subseteq\mathcal{A}$ as $\mathcal{B} \overset{\Delta}{=} \mathcal{A} \backslash \{s_1,s_2,s_3,s_4,m_1,m_2\}$.

Before beginning proofs of the individual bounds, we make one last general observation that in the two broadcast topologies $b_1$, $b_2$ and the fully-connected topology $f$, because the transmitters only know the network topology, and the channel coefficients are i.i.d. across space, $Y^N_{1,a}$ and $Y^N_{2,a}$, are statistically equivalent for $a \in \mathcal{E} \triangleq\{b_1,b_2,f\}$. Hence, if a receiver can decode the desired messages using the signals of $Y_{2,\mathcal{E}}^N$, it can also do so using $Y_{1,\mathcal{E}}^N$. That is, without any rate loss, Receiver 2 should be able to decode $(W_{21},W_{22})$ using $Y_{2}^N \!\!=\!\! \left(\{Y_{2,s_k}^N\}_{k=1}^4, Y_{2,m_1}^N, Y_{2,m_2}^N,\{Y_{2,z_k}^N\}_{k=1}^4, Y_{2,i_1}^N, Y_{2,i_2}^N, Y_{1,\mathcal{E}}^N\right)$.

We start the proof of the bound in (\ref{eq:bond1}) with the following two steps:

\noindent\emph{1. Channel Enhancements: }
We enhance the system without any rate loss by allowing the two transmitters to cooperate at infinite rate. As a result, the system is converted into a two-user MISO broadcast channel with message $W_j$ (with rate $R_j$) intended for Receiver $j$, $j=1,2$. 

\noindent\emph{2. Virtual Receivers: }
We introduce two virtual receivers, labeled $\tilde{1}$ and $\tilde{2}$, which are statistically indistinguishable from Receiver~1 and 2 respectively. 

Specifically at Receiver $\tilde{1}$, $Y_{\tilde{1},a} =Y_{1,a}$ when $a \neq z_1,z_4$. When the topology at time $n$ is $z_1$ or $z_4$,
\begin{equation}\label{eq:outz1z4}
\begin{aligned}
Y_1(n) &= h_{11}(n)X_1(n) + h_{12}(n)X_2(n) + Z_1(n),\\
Y_{\tilde{1}}(n) &= h_{\tilde{1}1}(n)X_1(n) + h_{\tilde{1}2}(n)X_2(n) + Z_{\tilde{1}}(n).
\end{aligned}
\end{equation}

At Receiver $\tilde{2}$, $Y_{\tilde{2},a}=Y_{2,a}$ when $a \neq z_2,z_3$. When the topology at time $n$ is $z_2$ or $z_3$,
\begin{equation}\label{eq:outz2z3}
\begin{aligned}
Y_2(n) &= h_{21}(n)X_1(n) + h_{22}(n)X_2(n) + Z_2(n),\\
Y_{\tilde{2}}(n) &= h_{\tilde{2}1}(n)X_1(n) + h_{\tilde{2}2}(n)X_2(n) + Z_{\tilde{2}}(n).
\end{aligned}
\end{equation}

In (\ref{eq:outz1z4}) and (\ref{eq:outz2z3}), for $j \in \{1,2\}$, $(h_{\tilde{j}1}(n), h_{\tilde{j}2}(n))$ is identically distributed and independent of $(h_{j1}(n),h_{j2}(n))$, and $Z_j(n)$ and $Z_{\tilde{j}}(n)$ are i.i.d. Gaussian with zero mean and unit variance.

We denote the collection of channel coefficients over $N$ time slots (including the channel coefficients at the virtual receivers) as $\tilde{{\boldsymbol h}}^N \overset{\Delta}{=} \left\{h_{ji}(n): j \in \{1,2,\tilde{1},\tilde{2}\}, i \in \{1,2\}\right\}_{n=1}^N$, and  recall the assumption that the values of $\tilde{{\boldsymbol h}}^N$ are not known at the transmitters, but are perfectly known at Receiver $j$, $j=1,2,\tilde{1},\tilde{2}$. We also note that due to the statistical equivalence of $Y_j^N$ and $Y_{\tilde{j}}^N$, if $R_j$ is achievable at Receiver $j$, then Receiver $\tilde{j}$ can also decode $W_j$.

\begin{lemma}
In Topologies $z_1$ and $z_4$ ($z_2$ and $z_3$), given the signals at Receivers $1$ and $\tilde{1}$ ($2$ and $\tilde{2}$) and the channel coefficients, the signals at Receiver $2$ ($1$) can be approximately reconstructed:
\begin{align}
h(Y_{2,a}^N|Y_{1,a}^N,Y_{\tilde{1},a}^N,\tilde{{\boldsymbol h}}^N) &\leq No(\log P), \; a = z_1, z_4, \label{eq:recon2}\\
h(Y_{1,a}^N|Y_{2,a}^N,Y_{\tilde{2},a}^N,\tilde{{\boldsymbol h}}^N ) &\leq No(\log P), \; a = z_2, z_3.\label{eq:recon1}
\end{align} 
\end{lemma}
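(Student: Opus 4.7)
The plan is to exploit the fact that in topology $z_1$ (and analogously in $z_4$), Receiver~2 observes a signal from only a single transmitter, while the pair $(Y_1,Y_{\tilde 1})$ provides two linearly independent noisy combinations of both transmitted symbols. Linear inversion therefore lets one express $Y_2$ as a deterministic linear function of $(Y_1,Y_{\tilde 1})$ plus a residual Gaussian term whose variance does not depend on $P$, and the conditional differential entropy of that residual is $O(1)$ per symbol, which is $o(\log P)$ in the high-SNR sense.

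Concretely, I would fix the convention that in topology $z_1$ the missing edge is $c_{22}=0$, so that for every time slot $n$ in which the topology is $z_1$,
\[
Y_2(n)=h_{21}(n)\,X_1(n)+Z_2(n),
\]
while
\[
\begin{pmatrix} Y_1(n)\\ Y_{\tilde 1}(n)\end{pmatrix}
= H(n)\begin{pmatrix} X_1(n)\\ X_2(n)\end{pmatrix}
+\begin{pmatrix} Z_1(n)\\ Z_{\tilde 1}(n)\end{pmatrix},
\]
where $H(n)$ stacks the four channel gains from the two transmitters to Receivers~$1$ and $\tilde 1$. Since the channel coefficients are drawn i.i.d.\ from a continuous distribution, $H(n)$ is almost surely invertible. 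Solving for $X_1(n)$ and substituting into the expression for $Y_2(n)$ yields
\[
Y_2(n)=h_{21}(n)\bigl[\alpha_1(n)\,Y_1(n)+\alpha_2(n)\,Y_{\tilde 1}(n)\bigr]+\tilde Z(n),
\]
where $\alpha_1(n),\alpha_2(n)$ are deterministic functions of $H(n)$, and $\tilde Z(n)$ is, conditional on $\tilde{\boldsymbol h}(n)$, a zero-mean complex Gaussian of variance $1+|h_{21}(n)|^2\bigl(|\alpha_1(n)|^2+|\alpha_2(n)|^2\bigr)$.

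Because this variance does not involve $P$, we have $h\bigl(Y_2(n)\mid Y_1(n),Y_{\tilde 1}(n),\tilde{\boldsymbol h}(n)\bigr)=h\bigl(\tilde Z(n)\mid\tilde{\boldsymbol h}(n)\bigr)$, which I need to show is $O(1)$ after averaging over $\tilde{\boldsymbol h}(n)$. This averaging is the step I expect to require the most technical care: although the channel coefficients themselves are bounded, $H$ can be nearly singular with small probability, so $|\alpha_i|^2$ scales like $1/|\det H|^2$ and is unbounded. A standard anti-concentration argument for the smallest singular value of a $2\times2$ matrix with bounded continuous density yields $\mathbb{E}\bigl[\log\bigl(1+|h_{21}|^2(|\alpha_1|^2+|\alpha_2|^2)\bigr)\bigr]<\infty$, which gives the needed constant. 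Chaining the per-slot bound via the chain rule over the $N_{z_1}$ slots in which the topology equals $z_1$ then produces $h(Y_{2,z_1}^N\mid Y_{1,z_1}^N,Y_{\tilde 1,z_1}^N,\tilde{\boldsymbol h}^N)\le N\cdot O(1)=N\,o(\log P)$, as required. The case $a=z_4$ is identical up to relabelling (the roles of $X_1$ and $X_2$ swap depending on which $c_{2i}$ is zero), and interchanging the two receiver indices throughout the argument gives (\ref{eq:recon1}) for $a\in\{z_2,z_3\}$.
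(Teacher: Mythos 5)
Your proof follows essentially the same route as the paper's: invert the almost-surely nonsingular $2\times2$ channel matrix at Receivers $1$ and $\tilde 1$, write $Y_2(n)$ as a deterministic linear combination of $(Y_1(n),Y_{\tilde 1}(n))$ plus a $P$-independent Gaussian residual, bound the per-slot conditional entropy by a constant (your extra care that $\mathbb{E}\{\log(1/|\det H|^2)\}<\infty$ simply makes explicit what the paper leaves implicit, and is consistent with its finite evaluation of the constant $E$ in Corollary~1), and sum over the slots spent in the given topology. The only nit is that your per-slot step should be the inequality $h\bigl(Y_2(n)\mid Y_1(n),Y_{\tilde 1}(n),\tilde{{\boldsymbol h}}^N\bigr)\le h\bigl(\tilde Z(n)\mid\tilde{{\boldsymbol h}}^N\bigr)$ rather than an equality, since the residual is correlated with the conditioning signals through $Z_1(n),Z_{\tilde 1}(n)$ --- but the inequality is all that is needed, exactly as in the paper.
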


\begin{proof}
We prove the case when the topology is $z_1$ ($a=z_1$ in (\ref{eq:recon2})). The proofs for the other 3 cases follow similarly, and are omitted here. 

We recall that if the topology at time $n$ is $z_1$, then by (\ref{eq:outz1z4}), $Y_1(n)$ and $Y_{\tilde{1}}(n)$ almost surely provide two noisy linearly independent equations of $X_1(n)$ and $X_2(n)$. More specifically,
\begin{equation*}
\begin{bmatrix}
Y_1(n) \\ Y_{\tilde{1}}(n) \\ Y_2(n)
\end{bmatrix} = \begin{bmatrix}
h_{11}(n) & h_{12}(n) \\ h_{\tilde{1}1}(n) & h_{\tilde{1}2}(n) \\ 0 & h_{22}(n)
\end{bmatrix} \begin{bmatrix}
X_1(n) \\X_2(n) 
\end{bmatrix} + \begin{bmatrix}
Z_1(n) \\ Z_{\tilde{1}}(n) \\Z_2(n)
\end{bmatrix},
\end{equation*}
where $\begin{bmatrix}
h_{11}(n) & h_{12}(n) \\ h_{\tilde{1}1}(n) & h_{\tilde{1}2}(n)
\end{bmatrix}$ is invertible almost surely. 

\begin{align*}
&h(Y_2(n)|Y_1(n),Y_{\tilde{1}}(n),\tilde{{\boldsymbol h}}^N\!) \\
= & h \bigg(\!\!Y_2(n)\!-\!\!\begin{bmatrix}
0 & h_{22}(n)
\end{bmatrix}\!\!\!\begin{bmatrix}
h_{11}(n) & h_{12}(n) \\ h_{\tilde{1}1}(n) & h_{\tilde{1}2}(n)
\end{bmatrix}^{-1}\!\!\begin{bmatrix}
Y_1(n) \\ Y_{\tilde{1}}(n)
\end{bmatrix} \!\! \Bigg| \\
&Y_1(n),\!Y_{\tilde{1}}(n),\tilde{{\boldsymbol h}}^N \!\! \bigg)\\
\leq & h \bigg(\!\! Z_2(n)\!-\!\begin{bmatrix}
0 & h_{22}(n)
\end{bmatrix}\!\!\begin{bmatrix}
h_{11}(n) & h_{12}(n) \\ h_{\tilde{1}1}(n) & h_{\tilde{1}2}(n)
\end{bmatrix}^{-1}\!\!\begin{bmatrix}
Z_1(n) \\ Z_{\tilde{1}}(n)
\end{bmatrix}\!\!\Bigg|\tilde{{\boldsymbol h}}^N \!\!\bigg)\\
=& o(\log P),
\end{align*}

and thus
\begin{align*}
h(Y_{2,z_1}^N|Y_{1,z_1}^N,Y_{\tilde{1},z_1}^N,\tilde{{\boldsymbol h}}^N\!)&\! \leq \!\!\!\!\!\!\!\sum \limits_{\substack{n \in \text{ time slots}\\\text{in Topology }  z_1}} \!\!\!\!\!\! h(Y_2(n)|Y_1(n),Y_{\tilde{1}}(n),\tilde{{\boldsymbol h}}^N\! )\\
&\leq No(\log P).
\end{align*}
\end{proof}

In the rest of the proof, we first derive two upper bounds on $R_1$ at Receiver $1$ and $\tilde{1}$ respectively, then combine the two bounds and use Lemma 1 to establish an upper bound on $2R_1$. We then repeat the same steps at Receiver $2$ and $\tilde{2}$ to obtain an upper bound on $2R_2$. Lastly, we sum up the upper bounds for $2R_1$ and $2R_2$ to arrive at an upper bound for the sum-rate, yielding (\ref{eq:bond1}) as an upper bound for the sum-DoF.

At Receiver 1, we apply Fano's inequality while treating $\tilde{{\boldsymbol h}}^N$ as a part of the output signals:
\begin{align}
N&R_1  \overset{(a)}{\leq} I(W_1;Y_1^N|\tilde{{\boldsymbol h}}^N) + N\epsilon_N \nonumber\\
 =& I(W_1;Y_{1,\mathcal{B}}^N|\tilde{{\boldsymbol h}}^N) \nonumber \\
 &+\! I(W_1;\{Y_{1,s_k}^N\}_{k=1}^4, Y_{1,m_1}^N, Y_{1,m_2}^N|Y_{1,\mathcal{B}}^N,\tilde{{\boldsymbol h}}^N)\!+\! N\epsilon_N  \nonumber\\
= & I(W_1;Y_{1,\mathcal{B}}^N|\tilde{{\boldsymbol h}}^N) \!+\! h(\{Y_{1,s_k}^N\}_{k=1}^4, Y_{1,m_1}^N, Y_{1,m_2}^N|Y_{1,\mathcal{B}}^N,\tilde{{\boldsymbol h}}^N)\nonumber\\
&-\! h(\{Y_{1,s_k}^N\}_{k=1}^4, Y_{1,m_1}^N, Y_{1,m_2}^N|W_1,Y_{1,\mathcal{B}}^N,\tilde{{\boldsymbol h}}^N)\!+\! N\epsilon_N\\
\overset{(b)}{\leq} &I(W_1;Y_{1,\mathcal{B}}^N|\tilde{{\boldsymbol h}}^N) \!+\! \sum\limits_{k=1}^4 h(Y_{1,s_k}^N|\tilde{{\boldsymbol h}}^N) \nonumber \\
&+\! h(Y_{1,m_1}^N|\tilde{{\boldsymbol h}}^N)\!+\! h(Y_{1,m_2}^N|\tilde{{\boldsymbol h}}^N)\!+\! N\epsilon_N \label{eq:BCR1}\\
\leq & I(W_1;Y_{1,\mathcal{B}}^N|\tilde{{\boldsymbol h}}^N) \nonumber \\
&+ N(\lambda_{s_1}+\lambda_{s_4}+\lambda_{m_1})(\log P+o(\log P)) + N\epsilon_N,\label{eq:fano1}
\end{align}
where (a) holds since $\tilde{{\boldsymbol h}}^N$ is independent of $W_1$, and (b) is because that given that $\left(X_1^N,X_2^N\right)$ is a function of $(W_1,W_2)$
\begin{align*}
&h(\{Y_{1,s_k}^N\}_{k=1}^4,Y_{1,m_1}^N,Y_{1,m_2}^N|W_1,Y_{1,\mathcal{B}}^N,\tilde{{\boldsymbol h}}^N) \\
\geq& h(\{Y_{1,s_k}^N\}_{k=1}^4,Y_{1,m_1}^N,Y_{1,m_2}^N|W_1,W_2,Y_{1,\mathcal{B}}^N,\tilde{{\boldsymbol h}}^N)\\
 =& \! h(\{Y_{1,s_k}^N\}_{k=1}^4,Y_{1,m_1}^N,Y_{1,m_2}^N|W_1,W_2,X_1^N,X_2^N,Y_{1,\mathcal{B}}^N,\tilde{{\boldsymbol h}}^N)\\
=&\! N\left(\sum \limits_{k=1}^4\lambda_{s_k}+\lambda_{m_1}+\lambda_{m_2}\right)\log(\pi e) \geq 0.
\end{align*}

Similarly at Receiver $\tilde{1}$,
\begin{align}
NR_1 \leq & I(W_1;Y_{\tilde{1},\mathcal{B}}^N|\tilde{{\boldsymbol h}}^N) \nonumber\\
&+ N(\lambda_{s_1}+\lambda_{s_4}+\lambda_{m_1})(\log P+o(\log P))+N\epsilon_N. \label{eq:fano1t}
\end{align}

Combining (\ref{eq:fano1}) and (\ref{eq:fano1t}), we have
\begin{align}
2NR_1 \leq & I(W_1;Y_{1,\mathcal{B}}^N|\tilde{{\boldsymbol h}}^N)\!+\! I(W_1;Y_{\tilde{1},\mathcal{B}}^N|\tilde{{\boldsymbol h}}^N) \nonumber\\
&+\! 2N(\lambda_{s_1}\!+\!\lambda_{s_4}\!+\!\lambda_{m_1})(\log P\!+\!o(\log P))\!+\! N\epsilon_N.\label{eq:R1}
\end{align}

We can further bound the terms $I(W_1;Y_{1,\mathcal{B}}^N|\tilde{{\boldsymbol h}}^N)$ and $I(W_1;Y_{\tilde{1},\mathcal{B}}^N|\tilde{{\boldsymbol h}}^N)$ in (\ref{eq:R1}) as follows:
\begin{align}
&I(W_1;Y_{1,\mathcal{B}}^N|\tilde{{\boldsymbol h}}^N) \nonumber \\
=& I(W_1;\{Y_{1,z_k}^N\}_{k=1}^4,\! Y_{1,\mathcal{E}}^N|\tilde{{\boldsymbol h}}^N) \nonumber \\
&+\!I(W_1;Y_{1,i_1}^N,\! Y_{1,i_2}^N|\{Y_{1,z_k}^N\}_{k=1}^4,\! Y_{1,\mathcal{E}}^N,\tilde{{\boldsymbol h}}^N)\nonumber\\
\leq &I(W_1;\{Y_{1,z_k}^N\}_{k=1}^4,Y_{1,\mathcal{E}}^N|\tilde{{\boldsymbol h}}^N) \nonumber \\
&+ h(Y_{1,i_1}^N,\! Y_{1,i_2}^N|\{Y_{1,z_k}^N\}_{k=1}^4,\! Y_{1,\mathcal{E}}^N,\tilde{{\boldsymbol h}}^N)\\
\leq & I(W_1;Y_{1,z_2}^N,Y_{1,z_3}^N,Y_{1,\mathcal{E}}^N|\tilde{{\boldsymbol h}}^N\!) \!\!+\!\! N(\lambda_{i_1}\!\!+\!\lambda_{i_2})(\log P \!+\! o(\log P))\nonumber \\
&+I(W_1;Y_{1,z_1}^N,\!Y_{1,z_4}^N|Y_{1,z_2}^N,Y_{1,z_3}^N,Y_{1,\mathcal{E}}^N,\tilde{{\boldsymbol h}}^N) \\
\leq & I(W_1;Y_{1,z_2}^N,Y_{1,z_3}^N,Y_{1,\mathcal{E}}^N|\tilde{{\boldsymbol h}}^N\!)  \!+\!\! N(\lambda_{i_1}\!\!+\!\lambda_{i_2})(\log P \!\!+\!\! o(\log P)) \nonumber \\
&+ h(Y_{1,z_1}^N,\!Y_{1,z_4}^N|Y_{1,z_2}^N,Y_{1,z_3}^N,Y_{1,\mathcal{E}}^N,\tilde{{\boldsymbol h}}^N)\nonumber\\
&-\!h(Y_{1,z_1}^N,\!Y_{1,z_4}^N|W_1, \!Y_{1,z_2}^N\!, Y_{1,z_3}^N\!, Y_{1,\mathcal{E}}^N,\tilde{{\boldsymbol h}}^N)\\
\leq & N(\lambda_{i_1}\!+\!\lambda_{i_2}\!+\!\lambda_{z_1}\!+\!\lambda_{z_4})(\log P \!+\! o(\log P))\nonumber \\
&+I(W_1;Y_{1,z_2}^N,Y_{1,z_3}^N,Y_{1,\mathcal{E}}^N|\tilde{{\boldsymbol h}}^N)\nonumber \\ &-h(Y_{1,z_1}^N,Y_{1,z_4}^N|W_1,Y_{1,z_2}^N,Y_{1,z_3}^N,Y_{1,\mathcal{E}}^N,\tilde{{\boldsymbol h}}^N),\label{eq:w1y1b}
\end{align}

and similarly,
\begin{align}
&I(W_1;Y_{\tilde{1},\mathcal{B}}^N|\tilde{{\boldsymbol h}}^N) \nonumber \\
\leq & N(\lambda_{i_1}\!+\!\lambda_{i_2}\!+\!\lambda_{z_1}\!+\!\lambda_{z_4})(\log P \!+\! o(\log P)) \nonumber \\
&+\! I(W_1;Y_{1,z_2}^N,Y_{1,z_3}^N,Y_{1,\mathcal{E}}^N|\tilde{{\boldsymbol h}}^N)\nonumber \\
&-h(Y_{\tilde{1},z_1}^N,Y_{\tilde{1},z_4}^N|W_1,Y_{1,z_2}^N,Y_{1,z_3}^N,Y_{1,\mathcal{E}}^N,\tilde{{\boldsymbol h}}^N).\label{eq:w1ty1b}
\end{align}

Summing up the two bounds in (\ref{eq:w1y1b}) and (\ref{eq:w1ty1b}), we find
\begin{align}
&I (W_1;Y_{1,\mathcal{B}}^N|\tilde{{\boldsymbol h}}^N)+ I(W_1;Y_{\tilde{1},\mathcal{B}}^N|\tilde{{\boldsymbol h}}^N)\nonumber\\
\leq & 2N(\lambda_{i_1}+\lambda_{i_2}+\lambda_{z_1}+\lambda_{z_4})(\log P + o(\log P)) \nonumber \\
&+2I(W_1;Y_{1,z_2}^N,Y_{1,z_3}^N,Y_{1,\mathcal{E}}^N|\tilde{{\boldsymbol h}}^N)\nonumber \\
&-h(Y_{1,z_1}^N,Y_{1,z_4}^N,Y_{\tilde{1},z_1}^N,Y_{\tilde{1},z_4}^N|W_1,Y_{1,z_2}^N,Y_{1,z_3}^N,Y_{1,\mathcal{E}}^N,\tilde{{\boldsymbol h}}^N)\\
\leq & 2N(\lambda_{i_1}+\lambda_{i_2}+\lambda_{z_1}+\lambda_{z_4})(\log P + o(\log P)) \nonumber \\
&+ 2I(W_1;Y_{1,z_2}^N,Y_{1,z_3}^N,Y_{1,\mathcal{E}}^N|\tilde{{\boldsymbol h}}^N) \nonumber \\
&+h(Y_{1,z_2}^N,Y_{1,z_3}^N,Y_{1,\mathcal{E}}^N|W_1,\tilde{{\boldsymbol h}}^N)\nonumber \\
&-h(Y_{1,z_1}^N,Y_{1,z_4}^N,Y_{\tilde{1},z_1}^N,Y_{\tilde{1},z_4}^N,Y_{1,\mathcal{E}}^N|W_1,\tilde{{\boldsymbol h}}^N)\\
\overset{(a)}{\leq} & 2N(\lambda_{i_1}+\lambda_{i_2}+\lambda_{z_1}+\lambda_{z_4})(\log P + o(\log P)) \nonumber \\
&+ 2I(W_1;Y_{1,z_2}^N,Y_{1,z_3}^N,Y_{1,\mathcal{E}}^N|\tilde{{\boldsymbol h}}^N) \nonumber \\
&+ h(Y_{1,z_2}^N,Y_{1,z_3}^N,Y_{1,\mathcal{E}}^N|W_1,\tilde{{\boldsymbol h}}^N) \nonumber \\
&-h(Y_{1,z_1}^N,Y_{1,z_4}^N,Y_{\tilde{1},z_1}^N,Y_{\tilde{1},z_4}^N,Y_{1,\mathcal{E}}^N|W_1,\tilde{{\boldsymbol h}}^N)\nonumber\\
& + h(Y_{1,z_1}^N,Y_{1,z_4}^N,Y_{\tilde{1},z_1}^N,Y_{\tilde{1},z_4}^N,Y_{1,\mathcal{E}}^N|W_1,W_2,\tilde{{\boldsymbol h}}^N) \nonumber \\
&-h(Y_{1,z_2}^N,Y_{1,z_3}^N,Y_{1,\mathcal{E}}^N|W_1,W_2,\tilde{{\boldsymbol h}}^N)\\
= & 2N(\lambda_{i_1}+\lambda_{i_2}+\lambda_{z_1}+\lambda_{z_4})(\log P + o(\log P)) \nonumber \\
&+ 2I (W_1;Y_{1,z_2}^N,Y_{1,z_3}^N,Y_{1,\mathcal{E}}^N|\tilde{{\boldsymbol h}}^N) \nonumber \\
&+ I(W_2;Y_{1,z_2}^N,Y_{1,z_3}^N,Y_{1,\mathcal{E}}^N|W_1,\tilde{{\boldsymbol h}}^N)\nonumber \\
&-I(W_2;Y_{1,z_1}^N,Y_{1,z_4}^N,Y_{\tilde{1},z_1}^N,Y_{\tilde{1},z_4}^N,Y_{1,\mathcal{E}}^N|W_1,\tilde{{\boldsymbol h}}^N)\\
= & 2N(\lambda_{i_1}+\lambda_{i_2}+\lambda_{z_1}+\lambda_{z_4})(\log P + o(\log P)) \nonumber \\
&+ I(W_1,W_2;Y_{1,z_2}^N,Y_{1,z_3}^N,Y_{1,\mathcal{E}}^N|\tilde{{\boldsymbol h}}^N) \nonumber \\
& + I(W_1;Y_{1,z_2}^N,Y_{1,z_3}^N,Y_{1,\mathcal{E}}^N|\tilde{{\boldsymbol h}}^N) \nonumber \\
&- I(W_2;Y_{1,z_1}^N,Y_{1,z_4}^N,Y_{\tilde{1},z_1}^N,Y_{\tilde{1},z_4}^N,Y_{1,\mathcal{E}}^N|W_1,\tilde{{\boldsymbol h}}^N)\label{eq:EUint3}\\
\leq & 2N(\lambda_{i_1}+\lambda_{i_2}+\lambda_{z_1}+\lambda_{z_4})(\log P + o(\log P)) \nonumber\\
&+ h(Y_{1,z_2}^N,Y_{1,z_3}^N,Y_{1,\mathcal{E}}^N|\tilde{{\boldsymbol h}}^N) + I(W_1;Y_{1,z_2}^N,Y_{1,z_3}^N,Y_{1,\mathcal{E}}^N|\tilde{{\boldsymbol h}}^N) \nonumber \\
&- I(W_2;Y_{1,z_1}^N,Y_{1,z_4}^N,Y_{\tilde{1},z_1}^N,Y_{\tilde{1},z_4}^N,Y_{1,\mathcal{E}}^N|W_1,\tilde{{\boldsymbol h}}^N) \\
\leq &N\big[2(\lambda_{i_1}\!+\!\lambda_{i_2}\!+\!\lambda_{z_1}\!+\!\lambda_{z_4})\!+\!\lambda_{z_2}\!+\!\lambda_{z_3}\!+\!\lambda_{b_1}\!+\!\lambda_{b_2}\!+\!\lambda_f\big]\nonumber \\
&(\log P + o(\log P))+I\left(W_1;Y_{1,z_2}^N,Y_{1,z_3}^N,Y_{1,\mathcal{E}}^N\right|\tilde{{\boldsymbol h}}^N) \nonumber \\
&\!\!-\!\! I(W_2;Y_{1,z_1}^N,Y_{1,z_4}^N,Y_{\tilde{1},z_1}^N,Y_{\tilde{1},z_4}^N,Y_{2,z_1}^N,Y_{2,z_4}^N,Y_{1,\mathcal{E}}^N|W_1,\tilde{{\boldsymbol h}}^N) \nonumber\\
&\!\!+\!\!\underbrace{I(W_2;Y_{2,z_1}^N,Y_{2,z_4}^N|W_1,Y_{1,z_1}^N,Y_{1,z_4}^N,Y_{\tilde{1},z_1}^N,Y_{\tilde{1},z_4}^N,Y_{1,\mathcal{E}}^N,\tilde{{\boldsymbol h}}^N\!)}_{\leq h\left(Y_{2,z_1}^N|Y_{1,z_1}^N,Y_{\tilde{1},z_1}^N,\tilde{{\boldsymbol h}}^N\right) + h\left(Y_{2,z_4}^N|Y_{1,z_4}^N,Y_{\tilde{1},z_4}^N,\tilde{{\boldsymbol h}}^N\right)},\label{eq:lemma1}
\end{align}
where Step (a) results from the observations: 
\begin{align*}
&h(Y_{1,z_1}^N,Y_{1,z_4}^N,Y_{\tilde{1},z_1}^N,Y_{\tilde{1},z_4}^N,Y_{1,\mathcal{E}}^N|W_1,W_2,\tilde{{\boldsymbol h}}^N) \nonumber \\
=& N\left[2(\lambda_{z_1}+\lambda_{z_4})+\lambda_{b_1}+\lambda_{b_2}+\lambda_f\right]\log(\pi e),\\
&h(Y_{1,z_2}^N,Y_{1,z_3}^N,Y_{1,\mathcal{E}}^N|W_1,W_2,\tilde{{\boldsymbol h}}^N) \nonumber \\
=& N\left[\lambda_{z_2}+\lambda_{z_3}+\lambda_{b_1}+\lambda_{b_2}+\lambda_f\right]\log(\pi e),
\end{align*}
and $\lambda_{z_1} +\lambda_{z_4} =\lambda_{z_2}+\lambda_{z_3}$, due to symmetry of the topology variation.

Then by virtue of Lemma 1, (\ref{eq:lemma1}) becomes
\begin{align}
&I (W_1;Y_{1,\mathcal{B}}^N|\tilde{{\boldsymbol h}}^N)+ I(W_1;Y_{\tilde{1},\mathcal{B}}^N|\tilde{{\boldsymbol h}}^N)\nonumber\\
\leq &N\left[2(\lambda_{i_1}\!+\!\lambda_{i_2}\!+\!\lambda_{z_1}\!+\!\lambda_{z_4})\!+\!\lambda_{z_2}\!+\!\lambda_{z_3}\!+\!\lambda_{b_1}\!+\!\lambda_{b_2}\!+\!\lambda_f\right]\nonumber \\
&\left(\log P + o(\log P)\right)+I(W_1;Y_{1,z_2}^N,Y_{1,z_3}^N,Y_{1,\mathcal{E}}^N | \tilde{{\boldsymbol h}}^N) \nonumber \\
&-I(W_2;Y_{2,z_1}^N,Y_{2,z_4}^N,Y_{1,\mathcal{E}}^N|W_1,\tilde{{\boldsymbol h}}^N )+ N o(\log P)\nonumber\\
\overset{(b)}{\leq} & N\left[2(\lambda_{i_1}+\lambda_{i_2})+3(\lambda_{z_1}+\lambda_{z_4})+\lambda_{b_1}+\lambda_{b_2}+\lambda_{f}\right]\nonumber \\
&\left(\log P+o(\log P)\right)+ I(W_1;Y_{1,z_2}^N,Y_{1,z_3}^N,Y_{1,\mathcal{E}}^N|W_2,\tilde{{\boldsymbol h}}^N )\nonumber \\
&- I(W_2;Y_{2,z_1}^N, Y_{2,z_4}^N,Y_{1,\mathcal{E}}^N|W_1,\tilde{{\boldsymbol h}}^N )+No(\log P),\label{eq:w1y1bw1y1t}
\vspace{-2 mm}
\end{align}
where Step (b) is due to the symmetric topology variation, and that $W_1$ and $W_2$ are independent.
\vspace{-1 mm}
Similarly for Receivers 2 and $\tilde{2}$, we have  
\begin{align}
2NR_2 
\leq & I(W_2;Y^N_{1,\mathcal{E}},Y_{2,\mathcal{B}\backslash\mathcal{E}}^N|\tilde{{\boldsymbol h}}^N )+ I(W_2;Y^N_{1,\mathcal{E}},Y_{\tilde{2},\mathcal{B}\backslash\mathcal{E}}^N|\tilde{{\boldsymbol h}}^N )\nonumber\\
& \!+\! 2N(\lambda_{s_2}+\lambda_{s_3}+\lambda_{m_2})(\log P+o(\log P)) \!+\! N\epsilon_N, \label{eq:R2}
\end{align}
where
\begin{align}
I&(W_2;Y^N_{1,\mathcal{E}},Y_{2,\mathcal{B}\backslash\mathcal{E}}^N|\tilde{{\boldsymbol h}}^N )+ I(W_2;Y^N_{1,\mathcal{E}},Y_{\tilde{2},\mathcal{B}\backslash\mathcal{E}}^N|\tilde{{\boldsymbol h}}^N )\nonumber\\
\leq & N\left[2(\lambda_{i_1}+\lambda_{i_2})+3(\lambda_{z_1}+\lambda_{z_4})+\lambda_{b_1}+\lambda_{b_2}+\lambda_{f}\right]\nonumber \\
&\left(\log P+o(\log P)\right) \nonumber + I(W_2;Y_{2,z_1}^N, Y_{2,z_4}^N,Y_{1,\mathcal{E}}^N|W_1,\tilde{{\boldsymbol h}}^N )\nonumber \\
&- I(W_1;Y_{1,z_2}^N,Y_{1,z_3}^N,Y_{1,\mathcal{E}}^N|W_2,\tilde{{\boldsymbol h}}^N )+No(\log P).\label{eq:w2y1e}
\end{align}

Adding both sides of (\ref{eq:R1}) and (\ref{eq:R2}), and by (\ref{eq:w1y1bw1y1t}) and (\ref{eq:w2y1e}), we arrive at
\begin{align}
2&N(R_1+R_2)\nonumber \\
\leq & 2N\bigg[2(\lambda_{i_1}+\lambda_{i_2})+3(\lambda_{z_1}+\lambda_{z_4})+\lambda_{b_1}+\lambda_{b_2}+\lambda_{f}\bigg]\nonumber \\
&(\log P+o(\log P))\!+\!No(\log P)\!+\! N \epsilon_N \nonumber\\
& +2N\left(\sum\limits_{k=1}^4 \lambda_{s_k} + \lambda_{m_1} + \lambda_{m_2}\right)(\log P+o(\log P))  \nonumber\\
=& 2N\left(1+\lambda_{i_1}+\lambda_{i_2}+\lambda_{z_1}+\lambda_{z_4}\right)(\log P+o(\log P))\nonumber \\
&+No(\log P)+N \epsilon_N\label{eq:sum1}.
\end{align}

The last equality is a result of evaluating the total probability of topological states: $\sum\limits_{k=1}^4 \lambda_{s_k}+\sum\limits_{k=1}^2(\lambda_{m_k}+\lambda_{b_k}+\lambda_{i_k})+2(\lambda_{z_1}+\lambda_{z_4})+\lambda_f =1$. Dividing both sides of (\ref{eq:sum1}) by $2N \log P$ and let both $N$ and $P$ go to infinity, we obtain the sum-DoF upper bound in (\ref{eq:bond1}).

\begin{remark}
In \cite{chen14} and \cite{tandon2013}, similar techniques were utilized to upper bound the DoF in the context of fading MISO broadcast channel with alternating (varying) CSIT.  $\hfill \square$
\end{remark}

\section{Ergodic sum-capacity of $2 \times 2$ Gaussian X-channel with symmetric topology variation}
In this section, we explore the performance of our CAT-based scheme at finite SNR. We start with the proof of Theorem~3, by establishing an ergodically achievable sum-rate for the $2 \times 2$ Gaussian X-channel with symmetric topology variation.
We then demonstrate the approximate optimality of our coding scheme by proving Corollary~1. 

\subsection{Achievability (Proof of Theorem~3)}
The achievable scheme follows the approach described in Section~III, which maximizes the sum-DoF. Before proceeding to present the ergodic sum-rate achieved by individual topologies and coding opportunities, we recall the assumption that channel coefficients are i.i.d. across space and time, transmitters know the network topology but not the actual values of the channel coefficients, and the receivers know both topology and all of the channel coefficients perfectly. The rate expressions $A$, $B$, $C$, $D$ are as defined in the statement of Theorem~3.

\subsubsection{Topologies that jointly achieve $A$} Across all instances of Topologies $s_1$, $s_4$, $b_1$, $b_2$, $s_2$ and $s_3$, using point-to-point codes, one of the two transmitters communicates to Receiver~1 in $s_1$, $s_4$, $b_1$ and $b_2$, and to Receiver $2$ in $s_2$ and $s_3$, achieving an ergodic rate equal to $A$. In all instances of Topologies $i_1$ and $i_2$, each transmitter communicates with the receiver towards which it has a communication link, achieving an ergodic sum-rate $2A$.  

\subsubsection{Topologies that jointly achieve $B$} 
Across all instances of Topologies $m_1$ and $z_1$ (or $z_4$), and/or $f$ that were surplus from creating coding opportunity 1 or 2, transmitters use Gaussian codebooks with rates dictated by the ergodic sum-capacity of the fading multiple access channel from the two transmitters to Receiver~1, achieving a sum-rate $B$. Similarly, $B$ can be achieved across instances of Topology $m_2$ and surplus $z_2$ (or $z_3$).

\subsubsection{Coding Opportunity 1 achieves $C$}
Over instances of combination $\{z_1,z_2\}$ (same for $\{z_3,z_4\}$), we encode 3 sub-messages, each using a Gaussian codebook, and employ the two-phase scheme described in Section~\ref{sec:coding1}. 

The received signals in the first time slot are:
\begin{equation}
\begin{aligned}
Y_1(1) &= h_{11}(1)X_1(1)+h_{12}(1)X_2(1)+Z_1(1),  \\
Y_2(1) &=h_{22}(1)X_2(1)+Z_2(1). 
\end{aligned}
\end{equation} 

The received signals in the second time slot are:
\begin{equation}
\begin{aligned}
Y_1(2) &=\! h_{12}(2)X_2(1)\!+\!Z_1(2), \\
Y_2(2) &=\! h_{21}(2)X_1(2)\!+\!h_{22}(2)X_2(1)\!+\!Z_2(2). 
\end{aligned}
\end{equation} 

With many instances of $\left[Y_1(1) \;\; Y_1(2)\right]^T$, Receiver~1 uses MIMO decoding of the codewords associated with $X_1(1)$ and $X_2(1)$ to achieve a sum-rate of $\mathbb{E}\left\{\log\left(\left|\mathbf{I} + P \mathbf{V} \mathbf{V}^H\right|\right)\right\}$. 

On the other hand, in \emph{each} instance of coding opportunity 1, Receiver~2 uses $Y_2(1)$ to cancel $X_2(1)$ from $Y_2(2)$, resulting in the residual signal ${h_{21}(2)X_1(2) \!+\! Z_2(2)\!-\!\dfrac{h_{22}(2)}{h_{22}(1)}Z_2(1)}$. Using the residual signals, Receiver~2 can decode the sub-message associated with $X_1(2)$ at rate $\mathbb{E}\left\{\log\left(1+\dfrac{|h_{21}(2)|^2 P}{1+|h_{22}(2)|^2\slash |h_{22}(1)|^2}\right)\right\}$.

Adding up achieved rates at two receivers, we obtain the sum-rate of $C$ bits per instance of coding opportunity 1, as desired.

\subsubsection{Coding Opportunity 2 achieves $D$}
Over the instances of topology combination $\{z_2,z_4,f\}$ (same for $\{z_1,z_3,f\}$), we create random Gaussian codebooks for 4 sub-messages, and apply the three-phase scheme in Section~\ref{sec:coding2}. 

The received signals in the first time slot are:
\begin{equation}
\begin{aligned}
Y_1(1) &= h_{12}(1)X_2(1)\!+\!Z_1(1), \\
Y_2(1) &= h_{21}(1)X_1(1)\!+\! h_{22}(1)X_2(1)\!+\!Z_2(1).  
\end{aligned}
\end{equation} 
The received signals in the second time slot are:
\begin{equation}
\begin{aligned}
Y_1(2) &=\! h_{11}(2)X_1(2)\!+\!h_{12}(2)X_2(2)\!+\!Z_1(2), \\
Y_2(2) &=\! h_{21}(2)X_1(2)\!+\!Z_2(2). 
\end{aligned}
\end{equation} 
The received signals in the third time slot are:
\begin{equation}
\begin{aligned}
Y_1(3) &=\! h_{11}(3)X_1(2) \!+\! h_{12}(3)X_2(1)\!+\!Z_1(3), \\
Y_2(3) &=\! h_{21}(3)X_1(2) \!+\! h_{22}(3)X_2(1)\!+\!Z_2(3). 
\end{aligned}
\end{equation}

In each instance of coding opportunity 2, Receiver 1 uses $Y_1(1)$ to completely cancel $X_2(1)$ from $Y_1(3)$, yielding a residual signal ${\tilde{Y}_1(3) = h_{11}(3)X_1(2)+Z_1(3)\!-\!\frac{h_{12}(3)}{h_{12}(1)}Z_1(1)}$. Performing MIMO decoding on many instances of $\left[Y_1(2) \; \; \tilde{Y}_1(3)\right]^T$, Receiver 1 can decode the sub-messages associated with $X_1(2)$ and $X_2(2)$ at rate $r \!=\!\mathbb{E}\left\{\!\log\!\!\left(\!\frac{\Bigg|\begin{bmatrix}1 & 0\\ 0 & 1+|h_{12}(3)|^2 \slash |h_{12}(1)|^2 
\end{bmatrix}\!+\! P \mathbf{U}\mathbf{U}^H\Bigg|}{1+|h_{12}(3)|^2 \slash |h_{12}(1)|^2}\right)\!\!\right\}$.

Receiver 2 uses $Y_2(2)$ to completely cancel $X_1(2)$ from $Y_2(3)$, and decodes the sub-messages associated with $X_1(1)$ and $X_2(1)$ with the same ergodic rate at Receiver 1.

Therefore, an ergodic sum-rate $2r \!=\! D$ bits per coding opportunity 2 instance is achieved.

\subsubsection*{Overall Ergodic Sum-Rate}
In $N$ time slots, our proposed coding scheme creates $\text{min}\{\lambda_{z_1},\lambda_{z_2}\}N + \text{min}\{\lambda_{z_3},\lambda_{z_4}\}N$ instances of coding opportunity 1, and $\text{min}\{|\lambda_{z_1}-\lambda_{z_2}|,\lambda_f\}N$ instances of coding opportunity 2. Let $\phi \overset{\Delta}{=}\text{min}\{|\lambda_{z_1}-\lambda_{z_2}|,\lambda_f\}$, we also have $\left(|\lambda_{z_1}-\lambda_{z_2}|-\phi\right)N$ instances of Topology $z_1$ (or $z_2$), $\left(|\lambda_{z_3}-\lambda_{z_4}|-\phi\right)N$ instances of Topology $z_3$ (or $z_4$), and $(\lambda_f-\phi)N$ instances of Topology $f$ surplus from creating the two coding opportunities. Therefore, applying the coding scheme described above, we achieve the ergodic sum-rate $R_{\Sigma}$:
\begin{align}
R_{\Sigma} =& A\left[\sum \limits_{k=1}^4 \lambda_{s_k} + 2(\lambda_{i_1}+\lambda_{i_2})+\lambda_{b_1}+\lambda_{b_2}\right] \nonumber \\
&+ \!B \left[\lambda_{m_1}\!+\!\lambda_{m_2}\!+\!|\lambda_{z_1}\!-\!\lambda_{z_2}|\!+\!|\lambda_{z_3}\!-\!\lambda_{z_4}|\!+\!\lambda_f -3\phi \right] \nonumber\\
&+ \!C\left[\text{min}\{\lambda_{z_1},\lambda_{z_2}\} + \text{min}\{\lambda_{z_3},\lambda_{z_4}\}\right] +D\phi,\label{eq:ergodic_rate}
\end{align} 

\subsection{Proof of Corollary 1}
The proof proceeds in three steps. First, we develop 3 upper bounds on the ergodic sum-capacity. Second, we compare the upper bounds with the achievable rate $R_{\Sigma}$ in (\ref{eq:ergodic_rate}) to identify the gaps between them. Lastly, we evaluate the gaps with uniform-phase, Rayleigh and Rice faded channel coefficients respectively. 

\subsubsection{Upper Bounds}
We begin by proving 3 upper bounds $U_1$, $U_2$ and $U_3$ on the ergodic sum-capacity $C_{\Sigma}$ such that $C_{\Sigma} \leq \text{min}\{U_1,U_2,U_3\}$. The derivations of $U_1$, $U_2$ and $U_3$ are carried out along the same line of reasoning as in the proofs of (\ref{eq:bond1}), (\ref{eq:bond2}) and (\ref{eq:bond3}) respectively in the converse for Theorem~1. The key difference is that we carefully evaluate or bound all $o(\log P)$ terms, and then take expectation over random channel coefficients to arrive at:   
\begin{align}
U_1 =& A\left[\sum \limits_{k=1}^4 \lambda_{s_k} + 2(\lambda_{i_1}+\lambda_{i_2})+ \lambda_{b_1} +\lambda_{b_2}+\lambda_{z_1}+\lambda_{z_4}\right]\nonumber \\
&+B\left[\lambda_{m_1} + \lambda_{m_2} + \sum \limits_{k=1}^4 \lambda_{z_k} + \lambda_f\right] + E(\lambda_{z_1}+\lambda_{z_4}) \nonumber \\
&+\log(\pi e)\Bigg[\sum\limits_{k=1}^4{\lambda_{s_k}}+\lambda_{m_1}+\lambda_{m_2} \nonumber \\
&+\! 2(\lambda_{i_1}+\lambda_{i_2})+2(\lambda_{z_1}+\lambda_{z_4})\Bigg], \label{eq:EU1}\\
U_2 = &A\left[\sum \limits_{k=1}^4 \lambda_{s_k} + 2(\lambda_{i_1}+\lambda_{i_2})+ \lambda_{b_1} +\lambda_{b_2}+\lambda_{z_1}+\lambda_{z_3} \right]\nonumber \\
&+ B\left[\lambda_{m_1} + \lambda_{m_2} + \sum \limits_{k=1}^4 \lambda_{z_k} +2\lambda_f\right] + F(\lambda_{z_2}+\lambda_{z_4}) \nonumber \\
&+ \log(\pi e) \left[1+\lambda_{i_1}\!+\! \lambda_{i_2}\!+\! 2(\lambda_{z_1}\!+\! \lambda_{z_4}) \!+\! \lambda_f\right], \label{eq:EU2}\\
U_3 =& A\left[\sum \limits_{k=1}^4 \lambda_{s_k} + 2(\lambda_{i_1}+\lambda_{i_2})+ \lambda_{b_1} +\lambda_{b_2}+\lambda_{z_2}+\lambda_{z_4} \right] \nonumber \\
&+ B\left[\lambda_{m_1} + \lambda_{m_2} + \sum \limits_{k=1}^4 \lambda_{z_k} +2\lambda_f\right] + F(\lambda_{z_1}+\lambda_{z_3})\nonumber \\
&+ \log(\pi e) \left[1+\lambda_{i_1}+\lambda_{i_2}+2(\lambda_{z_1}\!+\!\lambda_{z_4}) \!+\! \lambda_f\right],\label{eq:EU3}\\
\text{where}\nonumber \\
E \!\!=\!\mathbb{E}\!&\left\{\!\log\!\left(\!\! 1\!+\!\!\dfrac{|h_{22}h_{\tilde{1}1}|^2\!+\!|h_{11}h_{22}|^2}{|h_{11}h_{\tilde{1}2}|^2\!+\!|h_{12}h_{\tilde{1}1}|^2\!-\!2Re(h_{11}h_{\tilde{1}1}^*h_{\tilde{1}2}h_{12}^*)}\!\right)\!\!\right\}\!\!, \nonumber \\
F \!\!= \!\mathbb{E}\!&\left\{\log\left(1+\dfrac{|h_{12}|^2}{|h_{22}|^2}\right)\right\}, \nonumber
\end{align}
$(h_{\tilde{1}1},h_{\tilde{1}2})$ is identically distributed and independent of $(h_{11},h_{12})$.

\subsubsection{Gaps Between $R_\Sigma$ and Upper Bounds}
Next, we compare $U_1$, $U_2$ and $U_3$, with $R_{\Sigma}$ under the assumption that the topology variation is symmetric ($\lambda_{z_1}+\lambda_{z_4} = \lambda_{z_2}+\lambda_{z_3}$). Particularly, we consider two cases 1) $\lambda_{z_1} =\lambda_{z_2}$, \mbox{2) $\lambda_f \leq |\lambda_{z_1}-\lambda_{z_2}|$}.

\noindent Case 1): When $\lambda_{z_1}=\lambda_{z_2}$, and thus $\lambda_{z_3}=\lambda_{z_4}$, 
\begin{align}
&C_{\Sigma} \!-\! R_{\Sigma} \leq  U_1 \!-\! R_{\Sigma} \nonumber \\
= &(A\!+\!2B\!-\!C\!+E)(\lambda_{z_1}\!+\!\lambda_{z_4}) \!+\! \log(\pi e) \Bigg[\sum\limits_{k=1}^4{\lambda_{s_k}}\nonumber \\ 
&+\lambda_{m_1}+\lambda_{m_2}+2(\lambda_{i_1}+\lambda_{i_2})+2(\lambda_{z_1}+\lambda_{z_4})\Bigg].\label{eq:gap1}
\end{align}

\noindent Case 2): $U_2$ and $U_3$ are compared with $R_{\Sigma}$ respectively in each of following sub-cases:

i. When $\lambda_{z_1}<\lambda_{z_2}$, $\lambda_{z_3}<\lambda_{z_4}$, and $\lambda_f \leq \lambda_{z_2}-\lambda_{z_1}$,
\begin{align}
&C_{\Sigma} \!-\!\! R_{\Sigma}\leq  U_2 -R_{\Sigma} \nonumber \\
= & (A+2B-C)\left(\lambda_{z_1}+\lambda_{z_3}\right)+(4B-D)\lambda_f+\!F(\lambda_{z_2}\!+\!\lambda_{z_4}) \nonumber\\
&\!+\!\log(\pi e)\left[1\!+\!\lambda_{i_1}\!+\!\lambda_{i_2}\!+\!2(\lambda_{z_1}\!+\!\lambda_{z_4})\!+\!\lambda_f\right].\label{eq:gap2_1}
\end{align}

ii. When $\lambda_{z_1}> \lambda_{z_2}$, $\lambda_{z_3} > \lambda_{z_4}$, and $\lambda_f \leq \lambda_{z_1}-\lambda_{z_2}$,
\begin{align}
&C_{\Sigma} \!-\! R_{\Sigma}\leq  U_3 -R_{\Sigma} \nonumber \\
= & (A+2B-C)\left(\lambda_{z_2}+\lambda_{z_4}\right)+(4B-D)\lambda_f+ \!F(\lambda_{z_1}\!+\!\lambda_{z_3}) \nonumber \\
&+\log(\pi e)\left[1\!+\!\lambda_{i_1}\!+\!\lambda_{i_2}\!+\!2(\lambda_{z_1}\!+\!\lambda_{z_4})\!+\!\lambda_f\right].\label{eq:gap2_2}
\end{align}

\subsubsection{Gap Evaluation}
In the last step, we evaluate the obtained gaps of $C_{\Sigma}-R_{\Sigma}$ with the uniform-phase fading, Rayleigh fading and Rice fading respectively. 

\noindent\emph{Uniform-Phase Fading}

Because $|h_{ji}|^2=1$ for all $j,i$, $F=1$. We also have  
\begin{align}
A+2B-C =& \log(1\!+\!P)\!-\!\log\left(1\!+\!\dfrac{P}{2}\right)\!+\!2\log(1+2P) \nonumber \\
&-\! \log(P^2\!+\!3P\!+\!1) \nonumber\\
=& \log\dfrac{2P+2}{P+2} + \log\dfrac{4P^2+4P+1}{P^2+3P+1} \leq 3, \label{eq:A2BC}
\end{align}
and
\begin{align}
4B-D =& 2\left[2\log(2P+1)-\log\left(\dfrac{P^2}{2}+\dfrac{5P}{2} +1\right)\right] \nonumber \\
=& 2 \log\dfrac{8P^2+8P+2}{P^2+5P+2} \leq 6, \label{eq:4BD}
\end{align}

As we notice that $(h_{11}h_{\tilde{1}1}^*h_{\tilde{1}2}h_{12}^*)$ has magnitude 1, and its phase, mod $2\pi$, also admits a uniform distribution over $[0,2\pi)$, therefore we have $E = \dfrac{1}{2\pi}\int_0^{2\pi} \log\left( 1+\dfrac{1}{1-\cos \theta} \right)\,d\theta=1.9$.

When $\lambda_{z_1} = \lambda_{z_2}$, $\lambda_{z_3} = \lambda_{z_4}$, by (\ref{eq:gap1}) and (\ref{eq:A2BC}),
\begin{align}
C_{\Sigma} \!-\! R_{\Sigma} \leq&(3+1.9)(\lambda_{z_1}\!+\!\lambda_{z_4}) \!+\! \log(\pi e)\Bigg[\sum\limits_{k=1}^4{\lambda_{s_k}} \nonumber \\
&+\lambda_{m_1}+\lambda_{m_2}+2(\lambda_{i_1}+\lambda_{i_2})+2(\lambda_{z_1}+\lambda_{z_4})\Bigg] \nonumber\\
\leq& 4.9(\lambda_{z_1}+\lambda_{z_4})+ \log(\pi e)+\log(\pi e)(\lambda_{i_1}+\lambda_{i_2}) \nonumber\\
=&\log(\pi e) + \frac{4.9}{2}\left[\lambda_{i_1}+\lambda_{i_2}+2(\lambda_{z_1}+\lambda_{z_4})\right]\nonumber \\
&+\left(\log(\pi e)-\frac{4.9}{2}\right)(\lambda_{i_1}+\lambda_{i_2})\nonumber \\
& \leq 2\log(\pi e)\leq 6.2.
\end{align}

When $\lambda_f \leq |\lambda_{z_1}-\lambda_{z_2}|$, we can assume WLOG, that $\lambda_{z_1} > \lambda_{z_2}$, then by (\ref{eq:gap2_2}), (\ref{eq:A2BC}) and (\ref{eq:4BD}),
\begin{align}
C_{\Sigma} -R_{\Sigma} \leq & 3(\lambda_{z_2}\!+\!\lambda_{z_4})+6\lambda_f + \lambda_{z_1}+\lambda_{z_3}+2\log(\pi e) \nonumber \\
\leq &4(\lambda_{z_1}+\lambda_{z_3})+6.2 \leq 10.2.
\end{align}

\noindent \emph{Rayleigh Fading}

For i.i.d. Rayleigh fading, i.e., $h_{ik}$ are identically and independently Rayleigh distributed with $\mathbb{E}\{h_{ik}\}=0$, $\mathbb{E}\{|h_{ik}|^2\}=1$, for all $i \in \{1,\tilde{1},2\} $ and $k \in \{1,2\}$, we numerically evaluate the expectation of the expression $\log\!\left(\!\! 1\!+\!\dfrac{|h_{22}h_{\tilde{1}1}|^2\!+\!|h_{11}h_{22}|^2}{|h_{11}h_{\tilde{1}2}|^2\!+\!|h_{12}h_{\tilde{1}1}|^2\!-\!2Re(h_{11}h_{\tilde{1}1}^*h_{\tilde{1}2}h_{12}^*)}\!\right)$ to obtain an upper bound on $E$: $E \leq 1.457$. Similarly, we numerically evaluate the expectation of $\log\left(1+\dfrac{|h_{12}|^2}{|h_{22}|^2}\right)$ to obtain an upper bound on $F$:
\begin{equation*}
F = \int_{0}^{\infty} \!\!\! \int_{0}^{\infty}4xy \: e^{-(x^2+y^2)} \log\left(1+\dfrac{x^2}{y^2}\right) dx dy \leq 1.443.
\end{equation*}
We also numerically find that for arbitrary value of $P$, $A+2B-C \leq 4.35$, and $4B-D\leq 8.71$. From these, we calculate bounds on the gap in the same way as for the uniform-phase fading, and we have
\begin{equation}
C_{\Sigma}-R_{\Sigma} \leq \begin{cases}  6.2 \text{ bits/sec/Hz}, & \lambda_{z_1} = \lambda_{z_2}\\
12 \text{ bits/sec/Hz}, & \lambda_f \leq |\lambda_{z_1}-\lambda_{z_2}| \end{cases}.
\end{equation}

\noindent \emph{Rice Fading}

For i.i.d. Rice fading with Rice factor $K_r$, the channel coefficient $h_{ik}$, for all $i \in \{1,\tilde{1},2\} $ and $k \in \{1,2\}$, can be written as $h_{ik} = h_{Re} + jh_{Im}$ where $j^2=-1$, $h_{Re}\sim\mathcal{N}(\sqrt{K_r},\frac{1}{2})$ and $h_{Im}\sim\mathcal{N}(0,\frac{1}{2})$. For $K_r=1$, we numerically evaluate the terms $A+2B-C$, $4B-D$, $E$ and $F$, and find that for arbitrary value of $P$, $A+2B-C$ and $4B-D$ are upper bounded by 4.18 and 8.4 respectively. Also we have that $E \!=\! \mathbb{E} \! \left\{\!\log\!\left(\!\! 1\!\!+\!\!\dfrac{|h_{22}h_{\tilde{1}1}|^2\!+\!|h_{11}h_{22}|^2}{|h_{11}h_{\tilde{1}2}|^2\!+\!|h_{12}h_{\tilde{1}1}|^2\!-\!2Re(h_{11}h_{\tilde{1}1}^*h_{\tilde{1}2}h_{12}^*)}\!\!\right)\!\!\right\} \!\!\leq\! 1.67$, and $F = \mathbb{E}\left\{\log\left(1+\dfrac{|h_{12}|^2}{|h_{22}|^2}\right)\right\} \leq 1.39$. Applying these upper bounds to the derived gaps in (\ref{eq:gap1}), (\ref{eq:gap2_1}) and (\ref{eq:gap2_2}), we have for i.i.d. Rice fading with Rice factor $K_r=1$,
\begin{equation}
C_{\Sigma}-R_{\Sigma} \leq \begin{cases} 6.2\textup{ bits/sec/Hz}, &\lambda_{z_1}=\lambda_{z_2} \\ 11.77\textup{ bits/sec/Hz}, & \lambda_f \leq |\lambda_{z_1}-\lambda_{z_2}|\end{cases}.
\end{equation}

We emphasize here that our scheme is particularly well-suited to the high SNR regime since the gap becomes negligible with respect to the rate (i.e., ergodic capacity $\gg$ 10 bits/sec/Hz). 

\section{Numerical Results}
Having identified two coding opportunities where coding across topologies provide DoF gains, i.e., 50\% for coding opportunity 1 and 33.3\% for coding opportunity 2, compared with the TDMA scheme, we ask the question that how much rate gains can we actually get from each of the coding opportunities? In the first part of this section, we numerically evaluate the ergodic rate gains for Rayleigh and Rice fading channels in each of the coding opportunities. In the second part, we return to the rover-to-orbiter communication problem, and evaluate the DoF/throughput gains by coding across topologies for a 2-rover, 4-orbiter communication network.

\subsection{Ergodic Rate Gains of Coding Opportunities}
We numerically evaluate the ergodic rate gains over TDMA (one transmitter communicates with one receiver in each time slot) in both coding opportunities for the following i.i.d. fading channels: i. Rayleigh fading: $\mathbb{E}\{h_{ij}\}=0$, $\mathbb{E}\{|h_{ij}|^2\}=1$. ii. Rice fading: $\mathbb{E}\{h_{ij}\}=\sqrt{K_r}$, $\mathbb{E}\{|h_{ij}|^2\}=K_r+1$, where $K_r$ denotes the Rice factor.

\begin{figure}[htbp]
   \centering
   \subfigure[Coding opportunity 1.]{\includegraphics[width=0.48\textwidth]{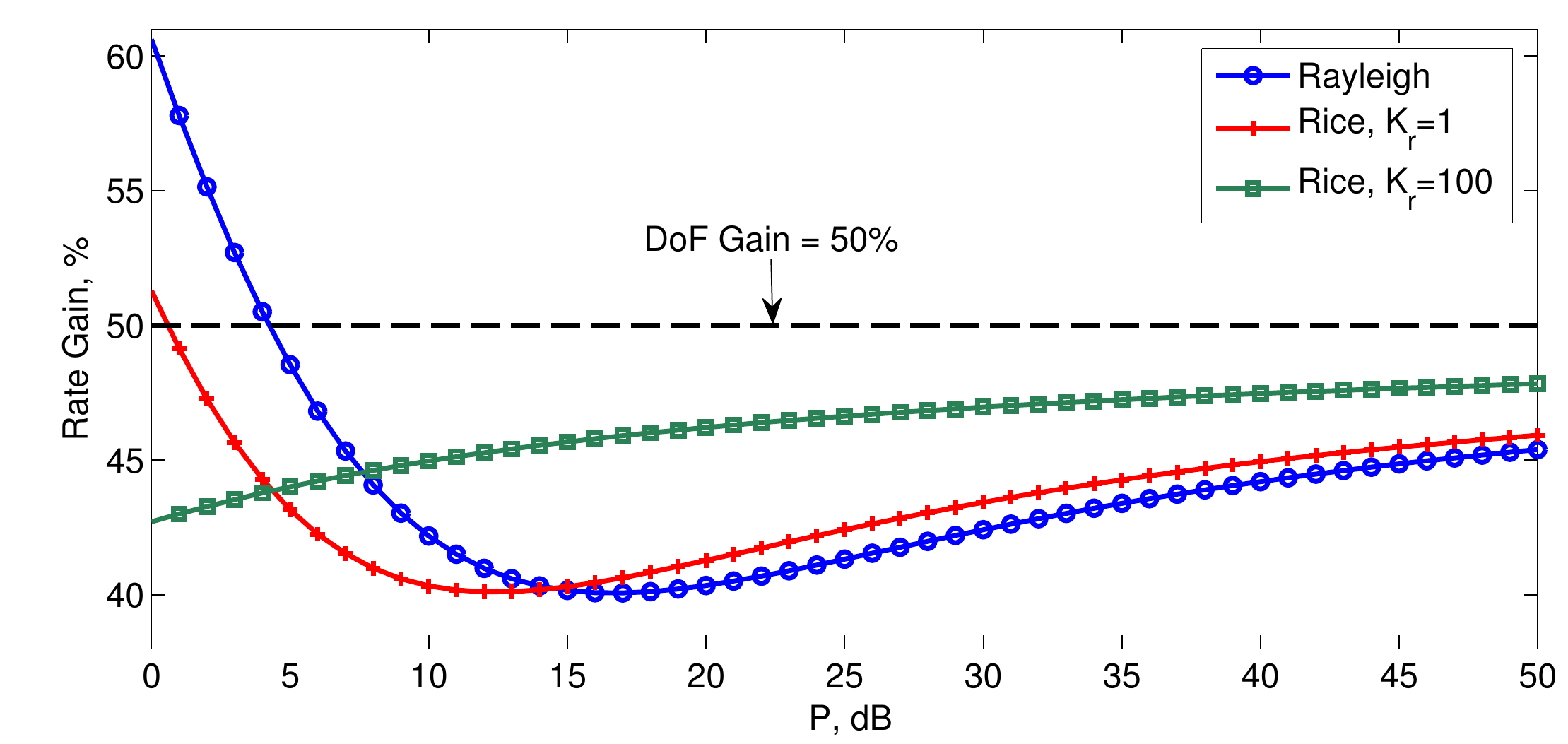}
       \label{fig:co1_rate}}
     \hfill
   \subfigure[Coding opportunity 2.]{\includegraphics[width=0.48\textwidth]{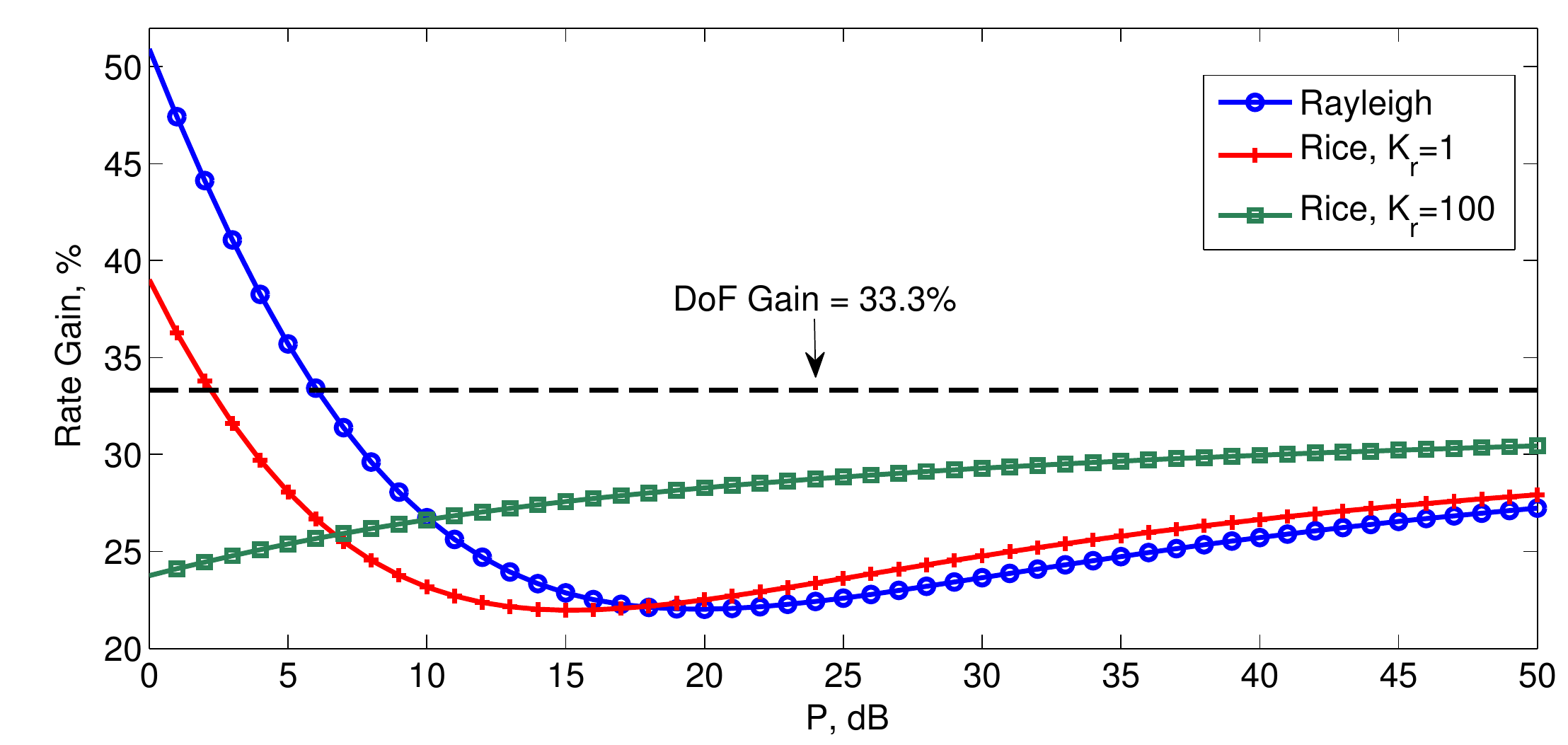}
       \label{fig:co2_rate}}
        \vspace{-2 mm}
   \caption{Ergodic rate gain by coding across topologies over TDMA.}
   \label{fig:co_rate}
\end{figure}

\vspace{-1mm}
\subsubsection{Coding Opportunity 1}
We assume that half of the topology instances are $z_1$, and the other half are $z_2$. We apply the proposed CAT scheme in Section~\ref{sec:coding1} to all $\{z_1,z_2\}$ combinations to numerically evaluate the ergodic sum-rate in (\ref{eq:ergodic_rate}), for the case $\lambda_{z_1}=\lambda_{z_2}=\frac{1}{2}$.

We see from Fig.~\ref{fig:co1_rate} that a substantial rate gain of at least 40\% can be achieved by the proposed CAT scheme over TDMA, for all channel distributions and transmit power constraints. As $P$ increases, the ergodic rate gain converges to the DoF gain of 50\%. 

\subsubsection{Coding Opportunity 2}
We assume that one third of the topology instances are $z_2$, another one third are $z_4$ and the remaining one third are $f$. We apply the proposed CAT scheme in Section~\ref{sec:coding2} to all $\{z_2,z_4,f\}$ combinations to numerically evaluate the ergodic sum-rate in (\ref{eq:ergodic_rate}), for the case $\lambda_{z_2}=\lambda_{z_4}= \lambda_{f}=\frac{1}{3}$. 

We see from Fig.~\ref{fig:co2_rate} that a rate gain of at least 22\% can be achieved by the proposed CAT scheme over TDMA, for all channel distributions and transmit power constraints. As $P$ increases, the ergodic rate gain converges to the DoF gain of 33.3\%.

Furthermore, we find that for both coding opportunities, when the transmit power constraint $P$ is small (e.g., less than 8 dB for coding opportunity 1 and less than 10 dB for coding opportunity 2), coding across topologies is more beneficial for the channels with many reflection/scattering multipath signal components. However, as $P$ increases, the rate gain by coding across topologies is more significant for the channels with a dominant signal component.

\subsection{Performance Evaluation of a 2-rover, 4-orbiter System}
For a 2-rover, 4-orbiter communication network, we numerically evaluate the DoFs and throughputs achieved by TDMA and our proposed scheme, to see how much performance gain our scheme can provide in practical settings. 

\subsubsection*{System Settings}
We use the satellite orbital parameters drawn from MRO and Odyssey. We assume that the four ascending nodes, which specify the longitudes of the orbits of the four orbiters, are uniformly spaced, and the two rovers are placed at the same latitude.  

We assume 500 K system temperature and 800 kHz system bandwidth. We employ the free space propagation model for the rover-to-orbiter communication link with path loss exponent 2. Each rover has transmit power of 10 W.

\begin{figure}[htbp]
   \centering
   \subfigure[DoF gain.]{\includegraphics[width=0.48\textwidth]{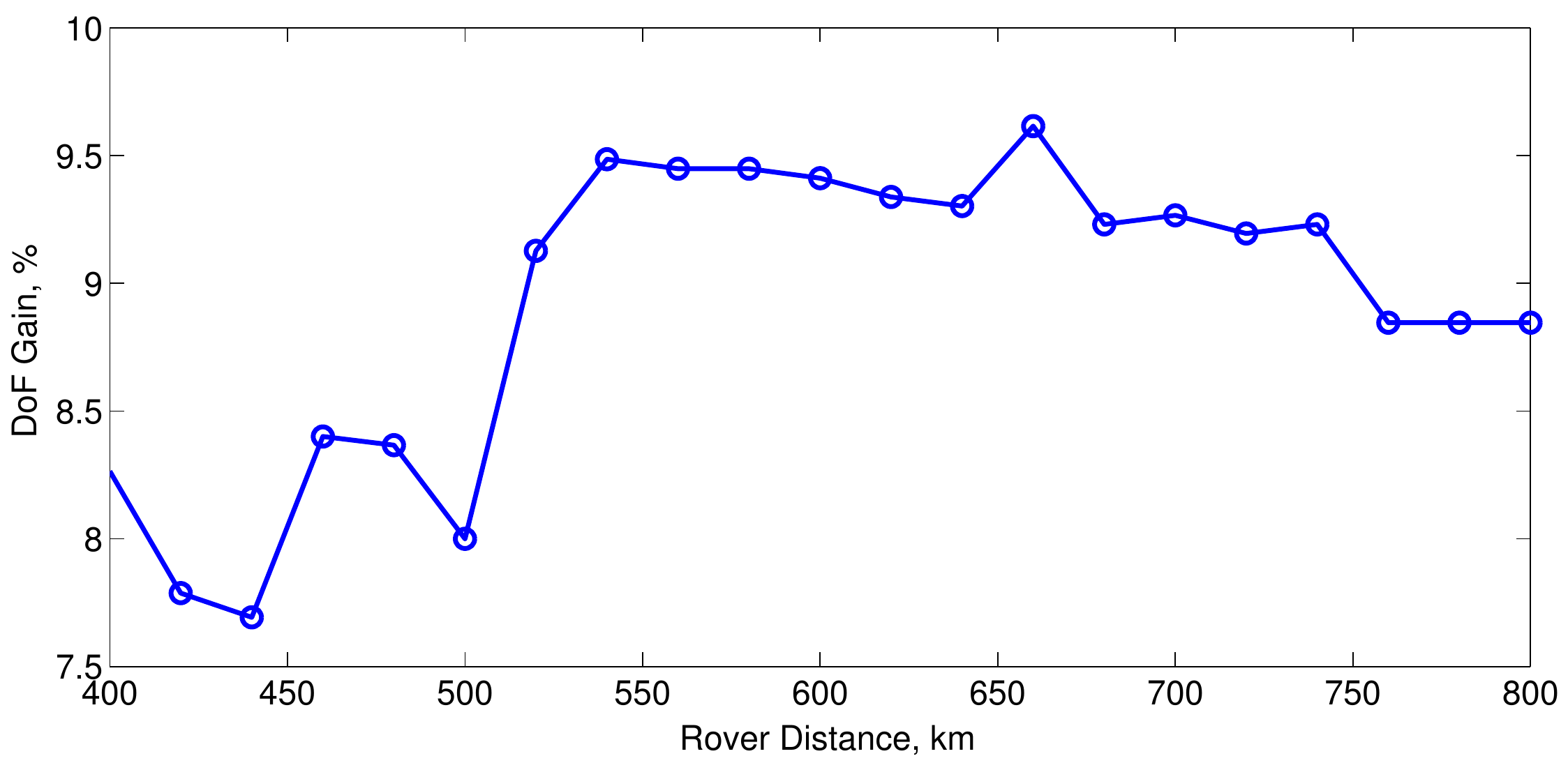}
       \label{fig:DoF_gain}}
     \hfill
   \subfigure[Throughput gain.]{\includegraphics[width=0.48\textwidth]{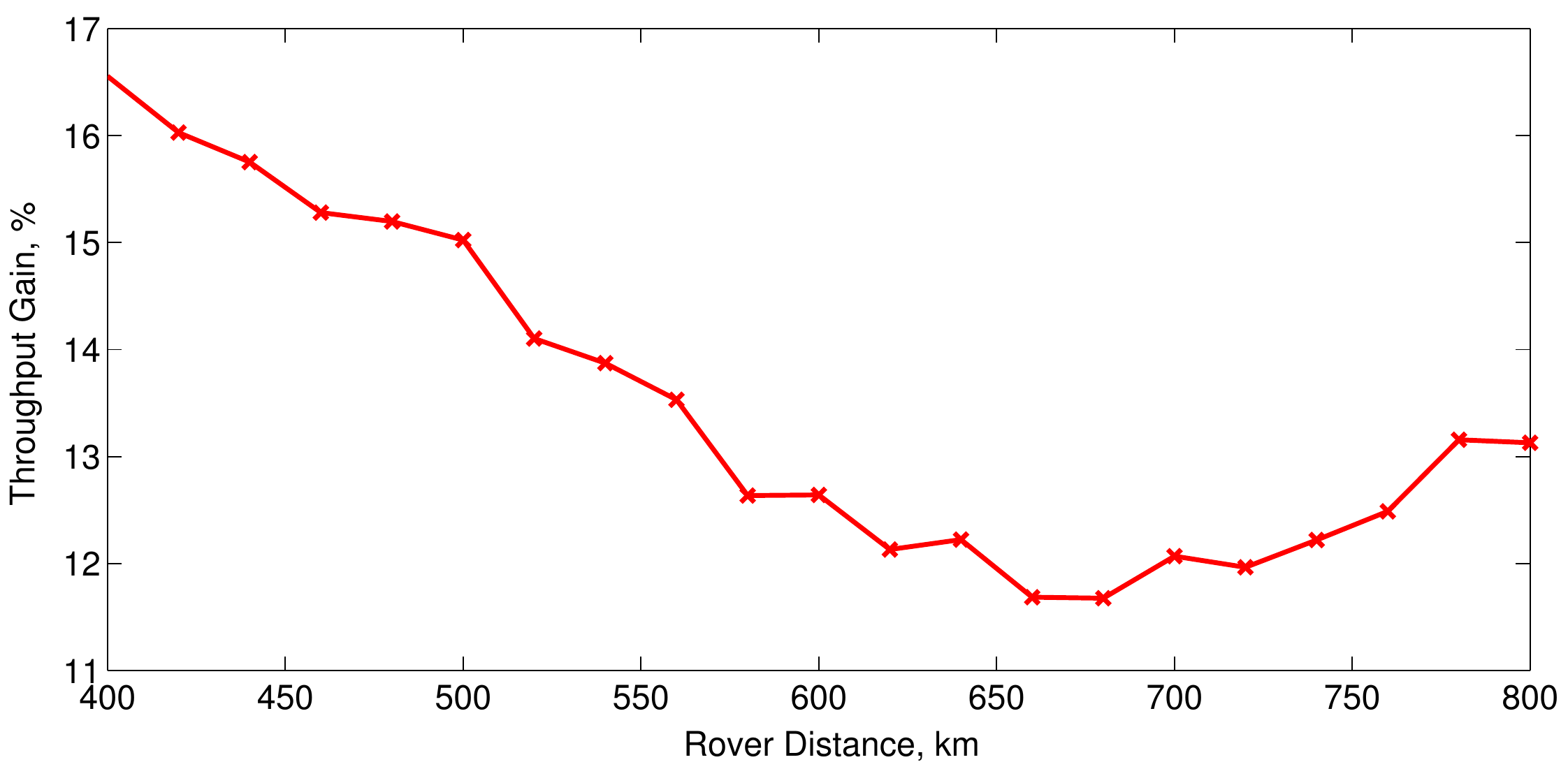}
       \label{fig:rate_gain}}
   \caption{Performance gains by applying coding across topologies to a 2-rover, 4-orbiter communication network.}
   \label{fig:performance}
\end{figure}

We find that, as shown in Fig.~\ref{fig:DoF_gain}, the DoF gain achieved by coding across topologies varies around 9\% across rover distances ranging from 400 km to 800 km, indicating that approximately 20\% of the topology instances participate in one of the coding opportunities. 

Fig.~\ref{fig:rate_gain} indicates that the proposed CAT scheme achieves at lease 11.6\% throughput gain over TDMA for all rover distances. We notice that the throughput gain consists of two parts: 1) the multiple-access rate gain by allowing two rovers to communicate to one orbiter simultaneously, 2) the rate gain by coding across topologies. When the rover distance increases from 400 km to around 650 km, while the DoF gain increases by 1\%, the throughput gain monotonically decreases. This is because the multiple-access rate gain drops more quickly within this distance range. After 650 km, the rate gain by coding across topologies starts to dominate, boosting up the overall throughput gain while the multiple-access rate gain keeps decreasing.   

\section{Concluding Remarks and Future Work}
In this paper, we modeled the communication problem from Mars rovers to Mars satellites as an X-channel with varying topologies, and conducted a theoretic analysis of the impact of coding across topologies on the DoF and the communication rate. For a $2 \times 2$ X-channel with varying topologies, we characterize the exact sum-DoF when the topology varies symmetrically and derive lower and upper bounds on the sum-DoF for the general asymmetric setting. Further, we numerically demonstrated the DoF/throughput gain in data rate provided by our proposed coding scheme over the baseline TDMA scheme that is currently in use. A natural extension of this work is to investigate methods/techniques (by either designing better coding schemes or proving a tighter upper bound) to characterize the exact sum-DoF without the symmetric constraint on the topology variation. Another interesting future direction is to explore the scenario with more than 4 orbiters (which is likely to happen in the future) to identify the potential new coding opportunities to further increase the DoF/throughput.

\section*{Appendix I\\ Proof of upper bounds in (\ref{eq:bond2}) and (\ref{eq:bond3})}
Recall that the output signals at Receiver~2 in Topology $a \in \mathcal{E}=\{b_1,b_2,f\}$, $Y_{2,a}^N$, is statistically equivalent to those received at Receiver~1, i.e., $Y_{1,a}^N$, and Receiver 2 decodes $(W_{21},W_{22})$ using
${Y_{2}^N \!\!=\!\! \left(\{Y_{2,s_k}^N\}_{k=1}^4, Y_{2,m_1}^N, Y_{2,m_2}^N,\{Y_{2,z_k}^N\}_{k=1}^4, Y_{2,i_1}^N, Y_{2,i_2}^N, Y_{1,\mathcal{E}}^N\right)}$.

To prove that (\ref{eq:bond2}) is an upper bound for the sum-DoF, we first use Fano's inequality at Receiver 2 and Receiver 1 respectively to establish upper bounds on $(R_{21}+R_{22})$ and $(R_{11}+R_{12})$. Then we simply add up these two bounds to obtain (\ref{eq:bond2}) as an upper bound on the sum-DoF.

At Receiver 2, by Fano's inequality and that $W_{11}$ is independent of $(W_{21},W_{22})$,
\begin{align}
&N (R_{21}+R_{22})\leq  I(W_{21},W_{22};Y_2^N|W_{11},{\boldsymbol h}^N)+N\epsilon_N \nonumber\\
= & h(Y_2^N|W_{11},{\boldsymbol h}^N)\!\!-\!\! h(Y_2^N|W_{21},W_{22},W_{11},{\boldsymbol h}^N) \!\!+\!\! N\! \epsilon_N\\
\leq & h(Y_2^N|W_{11},{\boldsymbol h}^N) \!-\! h(Y_{1,b_2}^N|W_{21},W_{22},W_{11},{\boldsymbol h}^N)\nonumber \\
&-\! h(Y_{2,z_2}^N|W_{21},W_{22},W_{11},Y_{1,b_2}^N,{\boldsymbol h}^N) \!+\! N \epsilon_N \label{eq:y2b2z2} \\
\overset{(a)}{=} & h(Y_2^N|W_{11},{\boldsymbol h}^N) - h(Y_{1,b_2}^N|W_{22},{\boldsymbol h}^N)\nonumber \\
&-h(Y_{2,z_2}^N|W_{21},W_{22},W_{11},X_{1,z_2}^N,Y_{1,b_2}^N,{\boldsymbol h}^N)\!+\! N\epsilon_N \\
\overset{(b)}{=} & h(Y_2^N|W_{11},{\boldsymbol h}^N) - h(Y_{1,b_2}^N|W_{22},{\boldsymbol h}^N) \nonumber\\
&-h\left((h_{22}X_2\!+\!Z_2)_{z_2}^N|W_{22},Y_{1,b_2}^N,{\boldsymbol h}^N\right) + N \epsilon_N \label{eq:EU2int1}\\
\overset{(c)}{\leq} & h(Y_2^N|W_{11},{\boldsymbol h}^N) - h(Y_{1,b_2}^N|W_{22},{\boldsymbol h}^N)+ No(\log P) \nonumber \\
&- h(Y_{1,z_2}^N|W_{22},Y_{1,b_2}^N,{\boldsymbol h}^N) + N\epsilon_N \\
\leq & h(Y_{2,s_2}^N|{\boldsymbol h}^N) \!+\! h(Y_{2,s_3}^N|{\boldsymbol h}^N)\!+\! h(Y_{2,m_2}^N|{\boldsymbol h}^N)\!+\! h(Y_{1,f}^N|{\boldsymbol h}^N) \nonumber \\
&+ \sum\limits_{k=1}^{2} h(Y_{2,i_k}^N|{\boldsymbol h}^N)\!\!+\! h(Y_{1,b_2}^N|{\boldsymbol h}^N)\!+\! \sum\limits_{k \neq 4}h(Y_{2,z_k}^N|{\boldsymbol h}^N) \nonumber \\
&\!+\!\! h(Y_{1,b_1}^N|W_{11},{\boldsymbol h}^N\!)\!\!+\! h(Y_{2,z_4}^N|W_{11},Y_{1,b_1}^N,{\boldsymbol h}^N)\!+\! N\! o(\log P)\nonumber\\
&- h(Y_{1,b_2}^N|W_{22},{\boldsymbol h}^N)\!-\! h(Y_{1,z_2}^N|W_{22},Y_{1,b_2}^N,{\boldsymbol h}^N)\!+\! N\epsilon_N, \label{eq:receiver2} 
\end{align}

Recall that $Y_{2,z_2}^N$ is the sub-vector of $Y_2^N$ in Topology $z_2$, and here $(h_{22}X_2\!+\!Z_2)_{z_2}^N$ denotes the sub-vector of the vector $\big\{h_{22}(n)X_2(n)+Z_2(n)\big\}_{n=1}^N$ in Topology $z_2$. Steps (a) and (b) result because $X_2^N$ is independent of $(W_{21},W_{11})$, and $X_1^N$ is a function of $(W_{21},W_{11})$. 

Step (c) holds because given $(h_{22}X_2\!+\!Z_2)_{z_2}^N$,
\begin{align}
h&\left(Y_{1,z_2}^N|(h_{22}X_2\!+\!Z_2)_{z_2}^N,{\boldsymbol h}^N\right) \nonumber\\
\leq & \sum_{\substack{n \in \text{time slots}\\
\text{in Topology } z_2}} \!\!\!\!h \bigg(Y_1(n)\!-\! \dfrac{h_{12}(n)\left(h_{22}(n)X_2(n)\!+\! Z_2(n)\right)}{h_{22}(n)} \bigg| \nonumber \\
& h_{22}(n)X_2(n) \!+\! Z_2(n),{\boldsymbol h}^N\bigg)\nonumber\\
 \leq &  \sum_{\substack{n \in \text{time slots}\\
\text{in Topology } z_2}} \!\!\! \mathbb{E}\left\{\log \left[\pi e\left(1+\left|\frac{h_{12}(n)}{h_{22}(n)}\right|^2\right)\right]\right\}\!=\! No(\log P),\label{eq:EU2int3}
\end{align}
and hence,
\begin{align}
-&h\left((h_{22}X_2\!+\!Z_2)_{z_2}^N|W_{22},Y_{1,b_2}^N,{\boldsymbol h}^N\right)\nonumber\\
=&  h\left(Y_{1,z_2}^N|(h_{22}X_2\!+\!Z_2)_{z_2}^N, W_{22},Y_{1,b_2}^N,{\boldsymbol h}^N\right)\nonumber \\
&-h\left((h_{22}X_2\!+\!Z_2)_{z_2}^N, Y_{1,z_2}^N|W_{22},Y_{1,b_2}^N,{\boldsymbol h}^N\right)\nonumber\\
\leq & No(\log P)\!-\! h\left(Y_{1,z_2}|W_{22},Y_{1,b_2}^N,{\boldsymbol h}^N\right). \label{eq:EU2int2}
\end{align}

Similarly at Receiver 1,
\begin{align}
N &(R_{11}+R_{12}) \nonumber\\
\leq & h(Y_1^N|W_{22},{\boldsymbol h}^N) \!-\! h(Y_{1,b_1}^N|W_{11},W_{12},W_{22},{\boldsymbol h}^N)\nonumber \\
&-\!h(Y_{1,z_4}^N|W_{11},W_{12},W_{22},Y_{1,b_1}^N,{\boldsymbol h}^N)\!+\!\! N\!\epsilon_N \label{eq:y1b1z4}\\
\leq & h(Y_{1,s_1}^N|{\boldsymbol h}^N) \!+\! h(Y_{1,s_4}^N|{\boldsymbol h}^N) \!+\! h(Y_{1,m_1}^N|{\boldsymbol h}^N)\!+\! h(Y_{1,f}^N|{\boldsymbol h}^N) \nonumber \\
&+ \sum\limits_{k=1}^{2} h(Y_{1,i_k}^N|{\boldsymbol h}^N)\!+\! h(Y_{1,b_1}^N|{\boldsymbol h}^N) + \sum\limits_{k \neq 2}h(Y_{1,z_k}^N|{\boldsymbol h}^N) \nonumber \\
&\!+\! h(Y_{1,b_2}^N|W_{22},{\boldsymbol h}^N)\!+\! h(Y_{1,z_2}^N|W_{22},Y_{1,b_2}^N,{\boldsymbol h}^N)\!+\! N \!o(\log P) \nonumber\\
&-h(Y_{1,b_1}^N|W_{11},{\boldsymbol h}^N\!)\!-\! \! h(Y_{2,z_4}^N|W_{11},Y_{1,b_1}^N,{\boldsymbol h}^N)\!+\! N \epsilon_N. \label{eq:receiver1}
\end{align}

Summing up (\ref{eq:receiver2}) and (\ref{eq:receiver1}), we have
\begin{align}
&N (R_{21}+R_{22}+R_{11}+R_{12})\nonumber\\
\leq& N \!\bigg[\!\sum\limits_{k=1}^{4}(\lambda_{s_k}\!\!+\!\lambda_{z_k}) \!+\! \sum\limits_{k=1}^{2}\left(\lambda_{b_k}\!+\!\lambda_{m_k}\!+\!2\lambda_{i_k}\right)\!+\! \lambda_{z_1} \!\!+\!\lambda_{z_3} \!\!+\!2\lambda_f \!\bigg]\nonumber \\
&(\log P \!+\! o(\log P))\!+\!\! No(\log P)\!+\!\! N \epsilon_N\nonumber\\
=&  N(1+\lambda_{i_1}+\lambda_{i_2}+\lambda_{z_1} +\lambda_{z_3}+\lambda_{f})(\log P + o(\log P))\nonumber \\
&+ No(\log P)\!+\! N\epsilon_N\label{eq:sumz13}.
\end{align}

Normalizing both sides of (\ref{eq:sumz13}) by $N \log(P)$, and letting both $P$ and $N$ go to infinity, we obtain (\ref{eq:bond2}).
 
Replacing $W_{11}$ with $W_{12}$, $Y_{1,b_2}^N$ with $Y_{1,b_1}^N$, and $Y_{2,z_2}^N$ with $Y_{2,z_3}^N$ in (\ref{eq:y2b2z2}), we obtain another bound for $N(R_{21}+R_{22})$. Replacing $W_{22}$ with $W_{21}$, $Y_{1,b_1}^N$ with $Y_{1,b_2}^N$, and $Y_{1,z_4}^N$ with $Y_{1,z_1}^N$ in (\ref{eq:y1b1z4}), we have another bound for $N(R_{11}+R_{12})$. Adding these two new upper bounds yields the sum-DoF upper bound in (\ref{eq:bond3}). 

\section*{Appendix II\\ Sketch of the Proof of Theorem~2}

\subsection*{Achievability}
The achievable scheme of Theorem~2 for the general $2 \times 2 $ X-channel with varying topologies (not necessarily symmetric) further extends the scheme of the symmetric setting presented in Section~III-C to include the following 2 additional cases:
\begin{itemize}
\item Case 3: $\lambda_{z_1} \leq \lambda_{z_2}$ and $\lambda_{z_3} > \lambda_{z_4}$,
\item Case 4: $\lambda_{z_1} > \lambda_{z_2}$ and $\lambda_{z_3} \leq \lambda_{z_4}$.
\end{itemize} 

Employing the same coding scheme as in Section III-C to exhaustively utilize the instances of coding opportunity~1 then the instances of coding opportunity~2 for Case 1 and 2, we can achieve the following sum-DoFs for the general asymmetric setting:
\begin{itemize}
\item Case 1: $\lambda_{z_1} \leq \lambda_{z_2}$ and $\lambda_{z_3} \leq \lambda_{z_4}$
\begin{align}
d_{\Sigma} \geq& 1+\lambda_{i_1}+\lambda_{i_2} \nonumber \\
&+\min\{\lambda_{z_1}+\lambda_{z_4},\lambda_{z_2}+\lambda_{z_3},\lambda_{z_1}+\lambda_{z_3}+\lambda_f\},
\end{align}
\item Case 2: $\lambda_{z_1} > \lambda_{z_2}$ and $\lambda_{z_3} > \lambda_{z_4}$
\begin{align}
d_{\Sigma} \geq& 1+\lambda_{i_1}+\lambda_{i_2} \nonumber \\
&+\min\{\lambda_{z_1}+\lambda_{z_4},\lambda_{z_2}+\lambda_{z_3},\lambda_{z_2}+\lambda_{z_4}+\lambda_f\},
\end{align}
\end{itemize} 

For Case 3 and 4, after all possible instances of coding opportunity 1 are utilized (as in Step i of the scheme in Section III-C), delivering $3\min\{\lambda_{z_1},\lambda_{z_2}\}N+3\min\{\lambda_{z_3},\lambda_{z_4}\}N$ symbols, there is no coding opportunity 2 to take advantage of and all the remaining topologies are coded separately to deliver one symbol at each channel use. Therefore, the achieved sum-DoF is $1+\lambda_{i_1}+\lambda_{i_2}+\lambda_{z_1}+\lambda_{z_4}$ for Case~3 and $1+\lambda_{i_1}+\lambda_{i_2}+\lambda_{z_2}+\lambda_{z_3}$ for Case~4.

Overall, the achieved sum-DoF without the symmetric constraint is:
\begin{align}
&1+\lambda_{i_1}+\lambda_{i_2} \nonumber \\
&+\min\{\lambda_{z_1}\!+\! \lambda_{z_4},\lambda_{z_2}\!+\!\lambda_{z_3},\lambda_{z_1}\!+\!\lambda_{z_3}\!+\!\lambda_f,\lambda_{z_2}\!+\!\lambda_{z_4}\!+\!\lambda_f\}.
\end{align}

\subsection*{Upper Bound}
The converse of Theorem~2 includes the following 3 upper bounds on the sum-DoF of the general asymmetric setting: 

\begin{align}
d_{\Sigma} &\leq 1+\lambda_{i_1}+\lambda_{i_2}+ \frac{\lambda_{z_1}+\lambda_{z_2}+\lambda_{z_3}+\lambda_{z_4}}{2}, \label{eq:Abond1}\\
d_{\Sigma} &\leq 1+\lambda_{i_1}+\lambda_{i_2}+\lambda_{z_1}+\lambda_{z_3}+\lambda_{f},\label{eq:Abond2}\\
d_{\Sigma} &\leq 1+\lambda_{i_1}+\lambda_{i_2}+\lambda_{z_1}+\lambda_{z_3}+\lambda_{f},\label{eq:Abond3}
\end{align} 
where (\ref{eq:Abond2}) and (\ref{eq:Abond3}) are the same as (\ref{eq:bond2}) and (\ref{eq:bond3}) respectively and hold no matter if the topology variation is symmetric or not (see proofs in Appendix~I). Here we demonstrate how the upper bound (\ref{eq:Abond1}) is derived.

Following the same steps as proving the upper bound (\ref{eq:bond1}) for the symmetric setting until (\ref{eq:w1ty1b}), we have for Receivers 1 and $\tilde{1}$:
\begin{align}
&I (W_1;Y_{1,\mathcal{B}}^N|\tilde{{\boldsymbol h}}^N)+ I(W_1;Y_{\tilde{1},\mathcal{B}}^N|\tilde{{\boldsymbol h}}^N)\nonumber\\
\leq & 2N(\lambda_{i_1}+\lambda_{i_2}+\lambda_{z_1}+\lambda_{z_4})(\log P + o(\log P))\nonumber \\
&+2I(W_1;Y_{1,z_2}^N,Y_{1,z_3}^N,Y_{1,\mathcal{E}}^N|\tilde{{\boldsymbol h}}^N)\nonumber \\
&-h(Y_{1,z_1}^N,Y_{1,z_4}^N,Y_{\tilde{1},z_1}^N,Y_{\tilde{1},z_4}^N|W_1,Y_{1,z_2}^N,Y_{1,z_3}^N,Y_{1,\mathcal{E}}^N,\tilde{{\boldsymbol h}}^N)\\
= & 2N(\lambda_{i_1}+\lambda_{i_2}+\lambda_{z_1}+\lambda_{z_4})(\log P + o(\log P)) \nonumber \\
&+\! 2I(W_1;Y_{1,z_2}^N,Y_{1,z_3}^N,Y_{1,\mathcal{E}}^N|\tilde{{\boldsymbol h}}^N\!)\nonumber \\
&+ h(Y_{1,z_2}^N,Y_{1,z_3}^N,Y_{1,\mathcal{E}}^N|W_1,\tilde{{\boldsymbol h}}^N\!)\nonumber \\
&-h(Y_{1,z_1}^N,Y_{1,z_4}^N,Y_{\tilde{1},z_1}^N,Y_{\tilde{1},z_4}^N,Y_{1,z_2}^N,Y_{1,z_3}^N,Y_{1,\mathcal{E}}^N|W_1,\tilde{{\boldsymbol h}}^N)\\
\leq & 2N(\lambda_{i_1}+\lambda_{i_2}+\lambda_{z_1}+\lambda_{z_4})(\log P + o(\log P)) \nonumber \\
&+ 2I(W_1;Y_{1,z_2}^N,Y_{1,z_3}^N,Y_{1,\mathcal{E}}^N|\tilde{{\boldsymbol h}}^N) \nonumber \\
&+ h(Y_{1,z_2}^N,Y_{1,z_3}^N,Y_{1,\mathcal{E}}^N|W_1,\tilde{{\boldsymbol h}}^N) \nonumber \\
&-h(Y_{1,z_1}^N,Y_{1,z_4}^N,Y_{\tilde{1},z_1}^N,Y_{\tilde{1},z_4}^N,Y_{1,z_2}^N,Y_{1,z_3}^N,Y_{1,\mathcal{E}}^N|W_1,\tilde{{\boldsymbol h}}^N)\nonumber\\
& + h(Y_{1,z_1}^N,Y_{1,z_4}^N,Y_{\tilde{1},z_1}^N,Y_{\tilde{1},z_4}^N,Y_{1,z_2}^N,Y_{1,z_3}^N,Y_{1,\mathcal{E}}^N|W_1,W_2,\tilde{{\boldsymbol h}}^N)\nonumber \\
&-h(Y_{1,z_2}^N,Y_{1,z_3}^N,Y_{1,\mathcal{E}}^N|W_1,W_2,\tilde{{\boldsymbol h}}^N)\nonumber\\
= & 2N(\lambda_{i_1}+\lambda_{i_2}+\lambda_{z_1}+\lambda_{z_4})(\log P + o(\log P)) \nonumber \\
&+ 2I (W_1;Y_{1,z_2}^N,Y_{1,z_3}^N,Y_{1,\mathcal{E}}^N|\tilde{{\boldsymbol h}}^N) \nonumber \\
&+ I(W_2;Y_{1,z_2}^N,Y_{1,z_3}^N,Y_{1,\mathcal{E}}^N|W_1,\tilde{{\boldsymbol h}}^N)\nonumber \\
&-\! I(W_2;Y_{1,z_1}^N,Y_{1,z_4}^N,Y_{\tilde{1},z_1}^N,Y_{\tilde{1},z_4}^N,Y_{1,z_2}^N,Y_{1,z_3}^N,Y_{1,\mathcal{E}}^N|W_1,\tilde{{\boldsymbol h}}^N\!)\nonumber\\
\leq & 2N(\lambda_{i_1}+\lambda_{i_2}+\lambda_{z_1}+\lambda_{z_4})(\log P + o(\log P)) \nonumber \\
&+ I(W_1,W_2;Y_{1,z_2}^N,Y_{1,z_3}^N,Y_{1,\mathcal{E}}^N|\tilde{{\boldsymbol h}}^N) \nonumber \\
&+I(W_1;Y_{1,z_2}^N,Y_{1,z_3}^N,Y_{1,\mathcal{E}}^N|\tilde{{\boldsymbol h}}^N) \nonumber \\
&- I(W_2;Y_{1,z_1}^N,Y_{1,z_4}^N,Y_{\tilde{1},z_1}^N,Y_{\tilde{1},z_4}^N,Y_{1,\mathcal{E}}^N|W_1,\tilde{{\boldsymbol h}}^N)\\
\overset{(a)}{\leq} & N\left[2(\lambda_{i_1}\!+\!\lambda_{i_2}\!+\!\lambda_{z_1}\!+\!\lambda_{z_4})\!+\!\lambda_{z_2}\!+\!\lambda_{z_3}\!+\!\lambda_{b_1}\!+\!\lambda_{b_2}\!+\!\lambda_f\right]\nonumber \\&\left(\log P+o(\log P)\right)+ I(W_1;Y_{1,z_2}^N,Y_{1,z_3}^N,Y_{1,\mathcal{E}}^N|W_2,\tilde{{\boldsymbol h}}^N )\nonumber \\
&- I(W_2;Y_{2,z_1}^N, Y_{2,z_4}^N,Y_{1,\mathcal{E}}^N|W_1,\tilde{{\boldsymbol h}}^N )+No(\log P),\label{eq:Aint1}
\end{align}
where (a) holds following the same derivation from (\ref{eq:EUint3}) to (\ref{eq:w1y1bw1y1t}) in Section~IV except that no symmetric constraint is imposed on the topology variation.

Similarly for Receivers 2 and $\tilde{2}$, we have  
\begin{align}
I&(W_2;Y^N_{1,\mathcal{E}},Y_{2,\mathcal{B}\backslash\mathcal{E}}^N|\tilde{{\boldsymbol h}}^N )+ I(W_2;Y^N_{1,\mathcal{E}},Y_{\tilde{2},\mathcal{B}\backslash\mathcal{E}}^N|\tilde{{\boldsymbol h}}^N )\nonumber\\
\leq & N\left[2(\lambda_{i_1}\!+\! \lambda_{i_2}\!+\! \lambda_{z_2}\!+\! \lambda_{z_3})\!+\! \lambda_{z_1}\!+\! \lambda_{z_4}\!+\! \lambda_{b_1}\!+\! \lambda_{b_2}\!+\! \lambda_{f}\right]\nonumber \\
&\left(\log P+o(\log P)\right) \nonumber + I(W_2;Y_{2,z_1}^N, Y_{2,z_4}^N,Y_{1,\mathcal{E}}^N|W_1,\tilde{{\boldsymbol h}}^N ) \nonumber \\
&- I(W_1;Y_{1,z_2}^N,Y_{1,z_3}^N,Y_{1,\mathcal{E}}^N|W_2,\tilde{{\boldsymbol h}}^N )+No(\log P).\label{eq:Aint2}
\end{align}

Adding both sides of (\ref{eq:R1}) and (\ref{eq:R2}) in Section~IV, and by (\ref{eq:Aint1}) and (\ref{eq:Aint2}), we arrive at
\begin{align}
2&N(R_1+R_2) \nonumber \\
\leq & N\!\left[2 \! \left( \! 2 \lambda_{i_1} + 2\lambda_{i_2} + \sum\limits_{k=1}^4 \lambda_{z_k}+\lambda_{b_1}+\lambda_{b_2}+\lambda_f \! \right) \!+\! \sum\limits_{k=1}^4 \lambda_{z_k}\right] \nonumber \\
&(\log P+o(\log P))\!+\!No(\log P)\!+\! N \epsilon_N\nonumber\\
& +2N\left(\sum\limits_{k=1}^4 \lambda_{s_k} + \lambda_{m_1} + \lambda_{m_2}\right)(\log P+o(\log P))  \nonumber\\
=& N \! \left[2\left(1+\lambda_{i_1}+\lambda_{i_2}\right)+ \sum\limits_{k=1}^4 \lambda_{z_k}\right] \! (\log P+o(\log P))\nonumber \\
& + \! No(\log P)\!+\! N \epsilon_N\label{eq:Aint3}.
\end{align}

Dividing both sides of (\ref{eq:Aint3}) by $2N \log P$ and let both $N$ and $P$ go to infinity, we obtain the sum-DoF upper bound in (\ref{eq:Abond1}).

\section*{Acknowledgement}
The authors would like to thank Dr. Dariush Divsalar at JPL for useful discussions. 

\bibliographystyle{IEEEtran}
\bibliography{ref}

\end{document}